 \newcommand{\F}{\mathbb{F}}
 \newcommand{\R}{\mathbb{R}}
 \newcommand{\Z}{\mathbb{Z}}
 \newcommand{\cB}{\mathcal{B}}
 \newcommand{\cL}{\mathcal{L}}
 \newcommand{\cP}{\mathcal{P}}
 \newcommand{\cX}{\mathcal{X}}
\newcommand{\fix}{\mathrm{Fix}}
\DeclareMathOperator{\opspan}{span}
\DeclareMathOperator{\treewidthOp}{tw}
\newcommand{\tw}{\treewidthOp}
\DeclarePairedDelimiter\abs{\lvert}{\rvert}%
\begin{document}%
\title{A General Framework for Computing the Nucleolus Via Dynamic Programming\thanks{We acknowledge the support of the Natural Sciences and Engineering Research Council of Canada (NSERC). Cette recherche a \'et\'e financ\'ee par le Conseil de recherches en sciences naturelles et en g\'enie du Canada (CRSNG).}}
\titlerunning{Nucleolus Dynamic Programming}
%
\author{Jochen K\"{o}nemann\inst{1} \and
Justin Toth\inst{1}}
\authorrunning{J. K\"{o}nemann et al.}
%
\institute{University of Waterloo, Waterloo ON N2L 3G1,  Canada 
\email{\{jochen, wjtoth\}@uwaterloo.ca}}
\maketitle              

\begin{abstract}
  This paper defines a general class of cooperative games for which the nucleolus is efficiently computable. This class includes new members for which the complexity of computing their nucleolus was not previously known. We show that when the minimum excess coalition problem of a cooperative game can be formulated as a hypergraph dynamic program its nucleolus is efficiently computable. This gives a general technique for designing efficient algorithms for computing the nucleolus of a cooperative game. This technique is inspired by a recent result of Pashkovich~\cite{Pashkovich2018} on weighted voting games. However our technique significantly extends beyond the capabilities of previous work. We demonstrate this by applying it to give an algorithm for computing the nucleolus of $b$-matching games in polynomial time on graphs of bounded treewidth.
\keywords{Combinatorial Optimization  \and  Algorithmic Game Theory \and Dynamic Programming}
\end{abstract}

\section{Introduction and Related Work}\label{sec:intro}
\paragraph{}
Cooperative game theory studies situations in which individual agents form coalitions to work together towards a common goal. It studies questions regarding what sort of coalitions will form and how they will share the surplus generated by their collective efforts. A \emph{cooperative game} is defined by an ordered pair $([n],\nu)$ where $[n]$ is a finite set of players (labelled $1,\dots, n$), and $\nu$ is a function from subsets of $[n]$ to $\R$ indicating the value earned by each particular coalition. 

This paper studies the computational complexity of one of the most classical, deep, and widely applicable solution concepts for surplus division in cooperative games, the \emph{nucleolus}. In particular we study the relationship between the nucleolus, finding the minimum excess of a coalition, congruency-constrained optimization, and dynamic programming. Our first result unifies these areas and provides a general method for computing the nucleolus.

\begin{theorem}\label{th:dp-nucleolus}
    For any cooperative game $(n,\nu)$, if the minimum excess coalition problem on $(n,\nu)$ can be solved in time $T$ via an integral dynamic program then the nucleolus of $(n,\nu)$ can be computed in time polynomial in $T$.
\end{theorem}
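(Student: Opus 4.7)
The plan is to follow the classical Maschler scheme for computing the nucleolus: reformulate the nucleolus as the outcome of a sequence of linear programs $\mathrm{LP}_1, \mathrm{LP}_2, \dots$, where $\mathrm{LP}_k$ minimizes the maximum excess $\epsilon$ over all coalitions not yet fixed by the previous iterations, subject to $x([n]) = \nu([n])$ and to the previously fixed coalitions having their prescribed excess values. The key observation is that for each $\mathrm{LP}_k$, the separation problem for the exponentially many inequality constraints $\nu(S) - x(S) \leq \epsilon$ is precisely the minimum excess coalition problem on $(n,\nu)$. Thus, given an integral hypergraph dynamic program solving this separation problem in time $T$, each $\mathrm{LP}_k$ can be solved via the ellipsoid method in time polynomial in $T$.

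The principal difficulty is managing the set of tight coalitions after each $\mathrm{LP}_k$. Naively one would add each tight coalition as an equality constraint to $\mathrm{LP}_{k+1}$, but there may be exponentially many such coalitions. My approach is to leverage the structure of the hypergraph DP: the tight coalitions at the optimum $x^\ast$ of $\mathrm{LP}_k$ are precisely the coalitions encoded by \emph{optimal} computation traces of the DP evaluated at $x^\ast$. These optimal traces form a sub-hypergraph of the DP, which we exploit to replace the exponentially many tight inequalities by a polynomial-size set of linear constraints indexed by DP states, together with congruency constraints that force every DP optimum at $x^\ast$ to preserve its current excess. This turns $\mathrm{LP}_{k+1}$ into a congruency-constrained LP, which we again solve by the ellipsoid method using an augmented DP (lifted by the congruency states) as a separation oracle.

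The main obstacle will be proving that this iterative reduction terminates in polynomially many rounds while keeping the auxiliary DP polynomially bounded in size. For termination, I would show, in the spirit of Pashkovich~\cite{Pashkovich2018}, that each iteration strictly reduces the affine dimension of the feasible region of the current LP by at least one; since $x \in \R^n$ this caps the number of rounds at $n$. For complexity, I would bound the blow-up of the DP state space per round: each round augments the DP with a polynomial number of congruency coordinates (tracking the new fixed excess levels), so after $O(n)$ rounds both the DP size and the number of constraints remain polynomial in $T$ and $n$. A standard uniqueness argument for the Maschler scheme will then certify that the limit point $x^\ast$ is the nucleolus of $(n,\nu)$, completing the proof.
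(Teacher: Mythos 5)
Your proposal has the right skeleton --- an MPS-type hierarchy of linear programs solved by the ellipsoid method with the DP as separation oracle, termination via a dimension-decrement argument, and congruency-lifted DPs --- but the step that actually makes the later rounds tractable is missing, and the mechanism you substitute for it does not work. After round $k$ the inequality $x(S)\ge\nu(S)+\epsilon$ must be dropped exactly for the coalitions fixed on the optimal face; this set is not the set of optimal DP traces at a single optimum $x^\ast$ (being fixed is a property of the whole face, and the fixed family contains every coalition whose characteristic vector lies in the span of the genuinely tight ones), and, more importantly, the resulting separation problem ``find a violated constraint among the non-excluded coalitions'' is an optimization over $\cX$ minus a linear subspace, which is NP-hard in general (Lemma~\ref{lemma:np-hard}). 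The paper escapes this in two moves: (i) it relaxes the scheme so that round $i$ excludes only coalitions whose characteristic vectors lie in $\opspan(V_{i-1})$ for an explicitly maintained set $V_{i-1}$ of at most $n$ linearly independent fixed-coalition vectors, and (ii) it invokes the finite-field lemma (Lemma~\ref{lemma:finite-field}): a $0$-$1$ vector avoids $\opspan(V_{i-1})$ over $\R$ if and only if for some prime $p$ in a polynomial-size set, some $v$ in a basis of the orthogonal complement over $\F_p$, and some $k\ne 0$, the single congruence $v^T\chi(S)\equiv k \pmod p$ holds. Only then does the congruency-lifted DP of Theorem~\ref{th:congruency} have a well-defined condition to separate over. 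Your ``congruency constraints that force every DP optimum at $x^\ast$ to preserve its current excess'' encodes neither of these conditions, and I do not see how to make it do so.

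Your complexity accounting is also off. You propose to augment the DP cumulatively, adding congruency coordinates in each round; each coordinate multiplies the state space by $p$, so after $\Theta(n)$ rounds you would carry $p^{\Theta(n)}$ states, which is exponential. In the paper no accumulation occurs: each of the $O(n^6)$ separation calls within a given round imposes a \emph{single} congruence on the \emph{original} DP, costing only a factor $p^{3}$ in size, and the history of previous rounds is carried entirely by the small basis $V_{i-1}$, never by the DP itself.
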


Pashkovich~\cite{Pashkovich2018} showed how to reduce the problem of computing the nucleolus for weighted voting games to a congruency-constrained optimization problem. Pashkovich then shows how to solve this congruency-constrained optimization problem for this specific class of games via a dynamic program. In Section~\ref{sec:linear-subspace} we abstract his reduction to the setting of computing the nucleolus of general combinatorial optimization games.

Our main technical achievement is showing that adding congruency constraints to dynamic programs modelled by a directed acyclic hypergraph model inspired by the work of Campbell, Martin, and Rardin~\cite{martin1990polyhedral} adds only a polynomial factor to the computational complexity. This is the content of Theorem~\ref{th:congruency}, which is instrumental in demonstrating Theorem~\ref{th:dp-nucleolus}. Our formal model of dynamic programming, where solutions correspond to directed hyperpaths in a directed acyclic hypergraph, is described in Section~\ref{sec:dynamic-programming}. Proving Theorem~\ref{th:congruency} requires significant new techniques beyond~\cite{Pashkovich2018}. The series of lemmas in Section~\ref{sec:congruency-constrained-dp} take the reader through these techniques for manipulating directed acyclic hypergraph dynamic programs.

We show how Theorem~\ref{th:dp-nucleolus} not only generalizes previous work on computing the nucleolus, but significantly extends our capabilities to new classes of combinatorial optimization games that were not possible with just the ideas in~\cite{Pashkovich2018}. As we explain in Section~\ref{sec:comb-opt-games}, matching games are central to the study of combinatorial optimization games. The problem of computing the nucleolus of weighted matching games was a long-standing open problem~\cite{Faigle1998}~\cite{Kern2003} resolved only recently~\cite{Konemann2020}, nearly twenty years after it was first posed. The frontier for the field has now moved to $b$-matching games, for which computing the nucleolus is believed to be NP-hard in general due to the result in~\cite{Biro2017} which shows computing leastcore allocations to be NP-hard even in the unweighted, bipartite case with $b\equiv 3$. In Section~\ref{sec:applications} we give a result which significantly narrows the gap between what is known to be tractable and what is known to be intractable in that area.

\begin{theorem}\label{th:b-matching-nucleolus}
    For any cooperative $b$-matching game on a graph whose treewidth is bounded by a constant, the nucleolus can be computed in polynomial time.
\end{theorem}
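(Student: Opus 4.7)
The plan is to apply Theorem~\ref{th:dp-nucleolus}, which reduces the task to constructing an integral hypergraph dynamic program that solves the minimum excess coalition problem (MECP) for $b$-matching games on bounded-treewidth graphs in polynomial time. First, I would compute a nice tree decomposition $(T,\{B_t\}_{t \in V(T)})$ of the input graph $G$ with width $k=O(1)$, for instance via Bodlaender's algorithm.

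The central obstacle is that MECP is the nested optimization $\min_S \max_M(w(M)-x(S))$, since $\nu(S)$ is itself a maximum-weight $b$-matching in $G[S]$; a single-level hypergraph DP cannot directly handle this nesting. To flatten it, I would invoke LP duality on the $b$-matching polytope: writing
\[
\nu(S) \;=\; \min\Bigl\{\sum_{v \in S} b_v y_v \;:\; y \ge 0,\; y_u+y_v \ge w_{uv}\ \forall uv \in E[S],\ \text{plus odd-set dual constraints}\Bigr\},
\]
the MECP collapses to the single minimization
\[
\min_{(S,y)}\ \sum_{v \in S}(b_v y_v - x_v)\qquad \text{subject to $y$ being dual-feasible for } G[S].
\]
For bipartite $b$-matching no odd-set constraints are needed and $y$ is integer-valued by total unimodularity; for general $b$-matching the dual LP additionally includes blossom multipliers, but after scaling, all dual values lie in a polynomial-sized integer set given integer weights and rational $x$ of bounded denominator (as produced by the iterative nucleolus procedure).

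Next, I would construct the DP over the tree decomposition. The state at each bag $B_t$ records $(\sigma_v, y_v)_{v \in B_t}$, where $\sigma_v \in \{0,1\}$ encodes coalition membership and $y_v$ is the discretized dual at $v$; since $|B_t| \le k+1$, the state space per bag is polynomial. Standard tree-DP operations suffice: introduce nodes pick $(\sigma_v,y_v)$; edge-introduce nodes enforce $y_u + y_v \ge w_{uv}$ whenever $\sigma_u = \sigma_v = 1$; forget nodes project out $(\sigma_v, y_v)$ while accumulating $\sigma_v(b_v y_v - x_v)$ into the hyperpath cost; join nodes merge consistent states. The minimum-cost hyperpath in the resulting directed acyclic hypergraph encodes MECP, matching the model of Section~\ref{sec:dynamic-programming}.

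The hardest part of this plan is accommodating the odd-set (blossom) dual constraints in the non-bipartite case, since they are indexed by an exponential family of vertex subsets. My intended remedy exploits bounded treewidth: any tight blossom crosses the decomposition in only $O(k)$ vertices, so its essential interaction with each bag is encodable in $O(1)$ additional state, analogous to known blossom-aware tree-DP algorithms for computing maximum $b$-matching. With this refinement the DP is integral, polynomial-size, and correctly computes MECP, at which point Theorem~\ref{th:dp-nucleolus} delivers the claimed polynomial-time nucleolus algorithm.
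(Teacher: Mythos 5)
Your plan founders at the very first step: the ``central obstacle'' you identify does not exist. The minimum excess coalition problem asks for $\min_S\,(x(S)-\nu(S))$, and since $\nu(S)=\max_M w(M)$ enters with a \emph{negative} sign, this is $\min_S\min_M\,(x(S)-w(M))$, i.e.\ a single-level optimization over pairs $(S,M)$ with $M$ a $b$-matching of $G[S]$ --- equivalently $\max_{(M,S)}\,(w(M)-x(S))$. There is no $\min$--$\max$ nesting to flatten. The paper's DP model is built for exactly this situation: the definition of a dynamic programming formulation requires $f(x)=\max_{P\in g^{-1}(x)}c(P)$, so the inner maximization over $M$ is absorbed into the maximum over hyperpaths that project (via $g$) to the same coalition $S$. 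Accordingly, the paper runs a purely primal DP over the nice tree decomposition whose state at a bag $B_i$ is a triple $(X,d,F)$: the coalition vertices of $B_i$ not met by the local matching, the degree profile $d\in\{0,\dots,\Delta(G)\}^{B_i}$ of the partial matching at the bag, and the set $F$ of matching edges inside the bag --- polynomially many states for bounded treewidth, with no reference to duality or to the allocation's arithmetic.

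The detour through LP duality is where your real gap opens. For non-bipartite $b$-matching the dual has exponentially many odd-set (blossom) variables, and your remedy --- that any tight blossom meets the decomposition in only $O(k)$ vertices and can be folded into $O(1)$ extra state per bag --- is asserted, not proven; a blossom may contain arbitrarily many already-forgotten vertices, and you would need to certify dual feasibility and optimality against a family of constraints you never enumerate. Even in the bipartite case, discretizing the $y_v$ forces the state space to scale with the magnitude of the edge weights, which yields at best a pseudopolynomial bound rather than the polynomial bound the theorem claims. The fix is to abandon duality and optimize over $(S,M)$ jointly, which is precisely what the paper's Lemma~\ref{lemma:min-excess-b-matching-specific-dp} does via the subproblems \ref{prob:min-excess-b-matching-specific} with Introduce, Forget, and Join cases.
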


To achieve this result we give a dynamic program for computing the minimum excess coalition of a $b$-matching game in Lemma~\ref{lemma:min-excess-b-matching-specific-dp} then apply Theorem~\ref{th:dp-nucleolus}. This dynamic program necessarily requires the use of dynamic programming on hypergraphs instead of just simple graphs, motivating the increased complexity of our model over previous work.

\subsection{The Nucleolus}

When studying the question of surplus division, it is commonly desireable that shares will be split so all players have an incentive to work together, i.e.\ that the \emph{grand coalition} forms. A vector $x \in \R^n$ is called an \emph{allocation}, and if that vector satisfies $x([n]) = \nu([n])$ (efficiency) and $x_i \geq \nu(\{i\})$, for all $i\in[n]$, (individual rationality) we call $x$ an \emph{imputation}. We denote the set of imputations of $(n,\nu)$ by $I(n,\nu)$.

For any $S \subseteq [n]$ we define $x(S)-\nu(S)$ to be the \emph{excess} of $S$ with respect to allocation $x$. The following linear program maximizes the minimum excess:
\begin{align*}\label{lp:leastcore}\tag{$P_1$}
    \max&\ \epsilon \\
    \text{s.t. }& x(S) \geq \nu(S) + \epsilon &\forall S\subseteq [n]\\
    &x \in I(n,\nu)
\end{align*} 
We call this the \emph{leastcore linear program}. For any $\epsilon$ be let $P_1(\epsilon)$ denote the set of allocations $x$ such that $(x,\epsilon)$ is a feasible solution to (\ref{lp:leastcore}). If we let $\epsilon_1$ denote the optimal value of (\ref{lp:leastcore}) then we call $P_1(\epsilon_1)$ the \emph{leastcore}\footnote{In the literature it is more standard to define the leastcore without individual rationality constraints. Since individual rationality constraints rarely cause additional computational difficutly, we include them so that the leastcore's relationship to the nucleolus is more evident.} of the cooperative game.

For an imputation $x \in I(n,\nu)$ let $\theta(x)\in \R^{2^n-2}$ be the vector obtained by sorting the list of excess values $x(S) - \nu(S)$, for each $\emptyset\neq S \subset [n]$, in non-decreasing order. \begin{definition}The {nucleolus} is the imputation which lexicographically maximizes $\theta(x)$, formally: 
the nucleolus is equal to $\arg \text{lex} \max \{\theta(x): x \in I(n,\nu)\}.$
\end{definition}
The nucleolus was first defined by Schmeidler~\cite{Schmeidler1969}. In the same paper, Schmeidler showed the nucleolus to be a unique allocation and a continuous function of $\nu$. The nucleolus is a classical object in game theory, attracting attention for its geometric beauty~\cite{Maschler1979}, and its surprising applications. The most ancient of which is the application of the nucleolus as a bankrupcy division scheme in the Babylonian Talmud~\cite{Aumann1985}. Some other notable applications of the nucleolus include but are not limited to water supply management~\cite{Akbari2015}, fair file sharing on peer-to-peer networks~\cite{Militano2013}, resource sharing in job assignment~\cite{Solymosi2005}, and airport pricing~\cite{Branzei2006}.

\subsection{Computing the Nucleolus}

Multiple approaches exist for algorithmically finding the nucleolus of a cooperative game. The most ubiquitous of which is the MPS (Maschler, Peleg, Shapley) Scheme~\cite{Maschler1979} which operates by solving a hierarchy of at most $n$ linear programs, the last of which has the nucleolus as its unique optimal solution. In Section~\ref{sec:schemes} we elaborate on the MPS Scheme and a natural relaxation thereof. An alternative method of computing the nucleolus via characterization sets was proposed independently by Granot, Granot, and Zhu~\cite{Granot1998} and Reinjerse and Potters~\cite{Reijnierse1998}.

The complexity of computing the nucleolus varies dramatically depending on how the cooperative game $(n,\nu)$ is presented as input. If the function $\nu$ is presented explicitly, by giving as input the value of $\nu(S)$ for each $S\subseteq[n]$, then the MPS Scheme can be used to compute the nucleolus in polynomial time. The issue in this case is that the size of the specification of $\nu$ is exponential in the number of players and so the computation is trivial. We are interested cooperative games where $\nu$ can be determined implicitly via some auxiliary information given as input, which we call a \emph{compact representation} of $(n,\nu)$.

One prominent example of a cooperative game with a compact representation is the class of \emph{weighted voting games}. In a weighted voting game, each player $i \in [n]$ is associated with an integer weight $w_i\in \Z$. Additionally a threshold value $T\in \Z$ is given. For each $S \subseteq [n]$ the value of $\nu(S)\in \{0,1\}$ is $1$ if and only if $w(S) \geq T$.

It is not hard to see that $(n,\nu)$ is completely determined by $(w,T)$. In this case $(w,T)$ is a compact representation of the weighted voting game $(n,\nu)$. Even though they may appear simple at first, weighted voting games can have a lot of modelling power. In fact the voting system of the European Union can be modelled by a combination of weighted voting games~\cite{bilbao2002voting}. In~\cite{Elkind2009a} Elkind, Goldberg, Goldberg, and Wooldridge show that the problem of computing the nucleolus of a weighted voting game is NP-hard, in fact even the problem of testing if there is a point in the leastcore of a weighted voting game that assigns a non-zero payoff to a given player is NP-complete. Pashkovich~\cite{Pashkovich2018} later followed up with an algorithm based on the MPS Scheme which solves $O(n)$ linear programs, each in pseudopolynomial time, and thus computes the nucleolus of a weighted voting game in pseudopolynomial time. 

Pashkovich's result crucially relies on the existence of a well-structured dynamic program for knapsack cover problems which runs in pseudopolynomial time. Theorem~\ref{th:dp-nucleolus} and Section~\ref{sec:applications} place Pahskovich's algorithm in the context of a general framework for computing the nucleolus of cooperative games where a natural associated problem has a dynamic program: the \emph{minimum excess coalition problem}. 
\begin{definition}
In the minimum excess coalition problem the given input is a compact representation of a cooperative game $(n,\nu)$ and an imputation $x$. The goal is to output a coalition $S \subseteq [n]$ which minimizes excess, i.e.\ $x(S) - \nu(S)$, with respect to $x$.
\end{definition}
\subsection{Combinatorial Optimization Games}\label{sec:comb-opt-games}

A very general class of cooperative games with compact representations comes from the so-called \emph{cooperative combinatorial optimization games}. In games of this class some overarching combinatorial structure is fixed on the players, and for each subset $S$ of players, $\nu(S)$ can be determined by solving an optimization problem on this structure. Many classes of combinatorial optimization games can be defined and the complexity of their nucleoli have been studied leading to polynomial time algorithms, such as fractional matching, cover, and clique games~\cite{chen2012computing}, simple flow games~\cite{PRB06}, assignment games\cite{SR94} and matching games~\cite{Kern2003}. Fleiner, Solymosi and Sziklai used the concept of dually essential coalitions~\cite{Solymosi2016} to compute the nucleolus of a large class of directed acyclic graph games~\cite{Sziklai2017} via the characterization set method. Other cases have led to NP-hardness proofs, such as flow games~\cite{Deng2009a}, weighted voting games~\cite{Elkind2009a}, and spanning tree games~\cite{FKK98}. 

A prominent example of combinatorial optimization games is matching games. In matching games the players are vertices of a graph $G$ and $\nu(S)$ is equal to the size of the largest matching on $G[S]$. The question of whether or not the nucleolus of weighted matching games could always be computed in polynomial time was open for a long time~\cite{Faigle1998}~\cite{Kern2003}. Solymosi and Raghavan~\cite{SR94} gave an algorithm for computing the nucleolus of matching games on bipartite graphs. Bir\'o, Kern and Paulusma gave a combinatorial algorithm for computing the nucleolus of weighted matching games with a non-empty core~\cite{Biro2012}. Recently Koenemann, Pashkovich, and Toth~\cite{Konemann2020} resolved the question by giving a compact formulation for each linear program in the MPS Scheme for weighted matching games with empty core. 

A natural generalization of matching games is to weighted $b$-matching games. In weighted $b$-matching games a vector $b \in \Z^{V(G)}$ and vector $w \in \R^{E(G)}$ are given in addition to the graph $G$. The value of $S \subseteq [n]$ is equal to the maximum $w$-weight subset of edges in $G[S]$ such that each playere $v \in V(G)$ is incident to at most $b_v$ edges. In~\cite{Biro2017} they show how to test if an allocation is in the core of $b$-matching games when $b\leq 2$, and they show that for matching games where $b\equiv 3$ deciding if an allocation is in the core is coNP-complete. This result likely means that computing the nucleolus of $b\equiv 3$-matching games is NP-hard. In~\cite{Konemann2020} they show how to separate over the leastcore of any $b\leq 2$-matching game. In~\cite{biro2019generalized} Biro, Kern, Palvolgyi, and Paulusma show that computing the nucleolus of $b$-matching games is NP-hard by showing that testing if the core of such games is empty is NP-hard. Their construction uses non-bipiartite graphs and $b$-values of size at least $3$. Hence the question of the complexity of computing the nucleolus of $b$-matching games remains open for bipartite graphs, and for $b$-matching games where $b\leq 2$. By the preceding complexity discussion, it is necessary to impose some structure on $b$-matching games to compute their nucleolus in polynomial time. In Theorem~\ref{th:b-matching-nucleolus} we impose the structure of bounded treewidth and use our general framework to give an algorithm which computes the nucleolus of weighted $b$-matching games on graphs which have bounded treewidth.

\section{The Maschler Peleg Shapley Scheme}~\label{sec:schemes}

The most prominent technique for computing the nucleolus is the MPS Scheme~\cite{Maschler1979}. The first technique for computing the nucleolus is the Kopelowitz Scheme~\cite{Kopelowitz1967}, and the MPS Scheme is a modification thereof which runs in a linear number of rounds. To define the MPS Scheme we need the notion of a fixed set for a polyhedron. For any polyhedron $Q$, we define the set $\text{Fix}(Q)$ as 
$$\text{Fix}(Q) :=\{S \subseteq [n]: \exists c \in R \text{ such that }\forall x \in Q, x(S) = c \}.$$
In the MPS Scheme a sequence of linear programs $(P_1), (P_2),\dots, (P_N)$ is computed where the $i^\text{th}$ linear program ($i\geq 2$) is of the form 
\begin{align*}\label{lp:maschler}\tag{$P_i$}
\max\ &\epsilon \\
\text{s.t. } &x(S) \geq \nu(S) + \epsilon &\forall S \not\in \text{Fix}(P_{i-1}(\epsilon_{i-1})) \\
&x \in P_{i-1}(\epsilon_{i-1}),
\end{align*}
and the first linear program is the leastcore linear program (\ref{lp:leastcore}). The method terminates when the optimal solution is unique (yielding the nucleolus), and this happens after at most $n$ rounds~\cite{paulusma2001complexity}, since the dimension of the set of characteristic vectors of sets in $\fix (P_{i}(\epsilon_i))$ increases by at least one in each iteration. 
\paragraph{}
Since the MPS Scheme ends after at most $n$ linear program solves, the run time of the method is dominated by the time it takes to solve (\ref{lp:maschler}). To use the Ellipsoid Method~\cite{Khachiyan1980,Leung1989} to implement the MPS Scheme we need be able to separate over the constraints corresponding to all coalitions in $\fix (P_{i-1}(\epsilon_{i-1}))$ in each iteration. There can be an exponential number of such constraints in general, and some structure on the underlying cooperative game would need to be observed in order to separate these constraints efficiently. This requirement can be relaxed somewhat, and still retain the linear number of iterations required to compute the nucleolus. 
\subsection{The Relaxed MPS Scheme}
We will define a sequence of linear programs $Q_1,Q_2,\dots, Q_N$ where the unique optimal solution to $Q_N$ is the nucleolus of $(n,\nu)$. With each linear program $Q_i$ there will be an associated set of vectors $V_i$ contained in the set of incidence vectors of $\fix(Q_i)$. The feasible solutions to $Q_i$ will lie in $\R^n\times \R$. In keeping with the notion we used for $P_i(\epsilon_i)$, for each linear program $Q_i$ we let $\bar{\epsilon}_i$ be the optimal value of $Q_i$ and let 
$$Q_i(\bar{\epsilon}_i) := \{x \in \R^n : (x,\bar{\epsilon}_i) \text{ is feasible for } Q_i\}.$$

We will describe the linear programs $\{Q_i\}_i$ inductively. The first linear program is again the leastcore linear program of $(n,\nu)$. That is to say $Q_1$ is equal to (\ref{lp:leastcore}). Let $V_1\subseteq \R^n$ be a singleton containing the incidence vector of one coalition in $\fix(Q_1(\bar{\epsilon}_1))$. Now given $Q_{i-1}$ and $V_{i-1}$ we describe $Q_i$ as follows 
\begin{align*}\label{lp:relaxed-maschler}\tag{$Q_i$}
    \max\ &\epsilon \\
    \text{s.t. } &x(S) \geq \nu(S) + \epsilon &\forall S: \chi(S) \not\in \opspan(V_{i-1})\\
    &x \in Q_{i-1}(\bar{\epsilon}_{i-1}).
\end{align*}
Now we choose $v \in \fix(Q_i(\bar{\epsilon}_i))\backslash \opspan(V_{i-1})$ and set $V_i := V_{i-1} \cup \{v\}$. By the optimality of $\bar{\epsilon}_i$, $v$ always exists as long as $Q_i(\bar{\epsilon}_i)$ has affine dimension at least $1$. If $Q_i(\bar{\epsilon}_i)$ has affine dimension $0$ we terminate the procedure and conclude that $Q_i(\bar{\epsilon_i})$ is a singleton containing the nucleolus.

A nice proof of correctness for this scheme is given in~\cite{Pashkovich2018}, where this scheme is used to give a pseudopolynomial time algorithm for computing the nucleolus of weighted voting games.

\begin{lemma}
When the Relaxed MPS Scheme is run on a cooperative game $(n,\nu)$ yielding a hierarchy of linear programs $Q_1,\dots, Q_N$, with optimal values $\bar{\epsilon_1},\dots, \bar{\epsilon_N}$ respectively, the set $Q_N(\bar{\epsilon_N})$ is a singleton containing the nucleolus of $(n,\nu)$. Moreover $N$ is at most $n$.
\end{lemma}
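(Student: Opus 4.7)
The plan is to establish the two claims separately: first the dimension bound $N \le n$, and then that $Q_N(\bar\epsilon_N)$ is the singleton containing the nucleolus. The bound on $N$ is the easier half and follows directly from the construction: at each step we add a vector $v \in \fix(Q_i(\bar\epsilon_i)) \setminus \opspan(V_{i-1})$ to $V_{i-1}$, so $\dim \opspan(V_i) = \dim \opspan(V_{i-1}) + 1$. Since $V_i \subseteq \R^n$, the dimension cannot exceed $n$, and once $V_i$ spans all of the relevant affine structure the scheme halts. This gives $N \le n$.

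The technical heart of the correctness argument is to show that replacing ``$S \notin \fix(Q_{i-1}(\bar\epsilon_{i-1}))$'' by ``$\chi(S) \notin \opspan(V_{i-1})$'' does not change the feasible set at all. I would prove the key claim that $Q_i(\bar\epsilon_i)$ equals the set that would be obtained if we had instead used the full MPS scheme with constraints over all $S \notin \fix(Q_{i-1}(\bar\epsilon_{i-1}))$. Indeed, whenever $\chi(S) \in \opspan(V_{i-1})$ we can write $\chi(S) = \sum_j \lambda_j \chi(S_j)$ with $S_j$'s coalitions in $\fix(Q_{i-1}(\bar\epsilon_{i-1}))$; then for every $x \in Q_{i-1}(\bar\epsilon_{i-1})$ we have
\begin{equation*}
x(S) \;=\; \sum_j \lambda_j\, x(S_j) \;=\; \sum_j \lambda_j c_j,
\end{equation*}
which is a constant depending only on the $\lambda_j$ and the fixed values $c_j$. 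Hence $S$ itself lies in $\fix(Q_{i-1}(\bar\epsilon_{i-1}))$, so the excess constraint for $S$ is automatically tight throughout $Q_{i-1}(\bar\epsilon_{i-1})$ at some uniform value. Consequently the relaxed program $Q_i$ and the standard MPS program $P_i$ have identical feasible region and identical optimum $\bar\epsilon_i = \epsilon_i$, and $Q_i(\bar\epsilon_i) = P_i(\epsilon_i)$.

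Given this equivalence, correctness reduces to the correctness of the standard MPS scheme of Maschler, Peleg and Shapley: the nucleolus lies in every $P_i(\epsilon_i)$, and the last polytope in the hierarchy is the singleton $\{\eta\}$, where $\eta$ is the nucleolus. Combined with the dimension bound on $N$, this yields exactly the statement. The main obstacle is the verification of the span-versus-fix equivalence above; once that is in hand, termination at a zero-dimensional polytope forces $Q_N(\bar\epsilon_N)$ to be a single point and the Maschler--Peleg--Shapley identification of that point as the nucleolus completes the proof.
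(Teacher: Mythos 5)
The paper does not prove this lemma itself; it defers to Pashkovich~\cite{Pashkovich2018}, so the only thing to assess is whether your argument stands on its own. Your bound $N\le n$ is fine: each round adds a vector outside $\opspan(V_{i-1})$, so the dimension of $\opspan(V_i)$ increases by exactly one and cannot exceed $n$. The problem is the correctness half. Your key claim --- that $Q_i$ and the standard MPS program $P_i$ have identical feasible regions and identical optima --- is false. You correctly prove one inclusion: if $\chi(S)\in\opspan(V_{i-1})$ then $x(S)$ is an affine combination of quantities that are constant on $Q_{i-1}(\bar\epsilon_{i-1})$, hence $S\in\fix(Q_{i-1}(\bar\epsilon_{i-1}))$. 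But the converse fails, and it is precisely the converse you need. The scheme adds only \emph{one} new fixed incidence vector to $V$ per round, whereas $\fix(Q_{i-1}(\bar\epsilon_{i-1}))$ may contain coalitions whose incidence vectors span a much larger space. For any such coalition $S$ that is fixed, say $x(S)=c_S$ on $Q_{i-1}(\bar\epsilon_{i-1})$, but has $\chi(S)\notin\opspan(V_{i-1})$, the program $Q_i$ retains the constraint $x(S)\ge\nu(S)+\epsilon$, which now reads $\epsilon\le c_S-\nu(S)$. This is an upper bound on $\epsilon$ that is simply absent from $P_i$, so in general $\bar\epsilon_i<\epsilon_i$, the polytopes $Q_i(\bar\epsilon_i)$ and $P_i(\epsilon_i)$ differ, and even the number of rounds differs (indeed $Q_i(\bar\epsilon_i)$ may equal $Q_{i-1}(\bar\epsilon_{i-1})$ for several consecutive rounds while $V_i$ keeps growing). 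Consequently you cannot import ``the nucleolus lies in every $P_i(\epsilon_i)$'' to conclude it lies in every $Q_i(\bar\epsilon_i)$.

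What is actually needed is a direct argument that the nucleolus survives each round of the \emph{relaxed} hierarchy: by induction one shows that the nucleolus $\eta$ belongs to $Q_i(\bar\epsilon_i)$, typically via a lexicographic/averaging argument (if $\eta\notin Q_i(\bar\epsilon_i)$, a suitable convex combination of $\eta$ with an optimal point of $Q_i$ would lexicographically improve on $\eta$, contradicting its definition), together with the observation that when $Q_N(\bar\epsilon_N)$ has affine dimension $0$ it must therefore be $\{\eta\}$. This is the content of the proof in~\cite{Pashkovich2018} that the paper points to; your reduction to the unrelaxed MPS scheme skips exactly the step that makes the relaxed scheme nontrivial.
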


\section{The Linear Subspace Avoidance Problem}\label{sec:linear-subspace}

Motivated by the desire to design a separation oracle for the constraints of (\ref{lp:relaxed-maschler}) we initiate a general study of combinatorial optimization problems whose feasible region avoids a linear subspace. For our purposes, we say a \emph{combinatorial optimization problem} is an optimization problem of the form
\begin{equation}\tag{$P$}\label{prob:comb-opt}
\max\{f(x): x \in \cX\}
\end{equation}
where $\cX \subseteq \{0,1\}^n$ is known as the feasible region, and $f:\cX \rightarrow \R$ is the objective function. Normally (\ref{prob:comb-opt}) is presented via a compact representation. For example in the shortest path problem on a directed graph, $\cX$ is the family of paths in a directed graph $D$ and $f(x)$ is a linear function. The entire feasible set $\cX$ is uniquely determined by the underlying directed graph $D$, and $f$ is determined by weights on the arcs of $D$. When giving as input $D$ and the arc weights, the problem is completely determined without specifying every one of the exponentially many paths in $\cX$.

For compactly represented cooperative games the minimum excess coalition problem can be phrased as a problem of the form (\ref{prob:comb-opt}). Simply take $\cX$ to be the set of incidence vectors of subsets of $[n]$ and take $f(x)$ to be $x(S) - \nu(S)$.

Now consider a linear subspace $\cL \subseteq \R^E$. For our combinatorial optimization problem (\ref{prob:comb-opt}), the associated \emph{linear subspace avoidance problem} is 
\begin{equation}\tag{$P_\cL$}\label{prob:lsa}
    \max\{f(x):x \in \cX\backslash \cL\}
\end{equation}
Even when (\ref{prob:comb-opt}) can be solved in polynomial time with respect to its compact representation and $\cL$ is given through a basis, (\ref{prob:lsa}) can be NP-hard.
\begin{lemma}\label{lemma:np-hard}
    (\ref{prob:lsa}) is NP-hard in general even when (\ref{prob:comb-opt}) can be solved in polynomial time with respect to its compact representation and $\cL$ is given through a basis.
\end{lemma}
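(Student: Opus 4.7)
The plan is to prove Lemma~\ref{lemma:np-hard} by a polynomial-time reduction from the NP-complete SUBSET SUM problem. The main challenge is that avoidance of a linear subspace is naturally a disjunctive condition (``at least one defining equation is violated''), while SUBSET SUM asks for a conjunctive equality $\sum_{i\in S} a_i = T$. To bridge this asymmetry I would take $\cX$ to be a combinatorially rich polynomial-time object rather than the trivial hypercube, and let the interaction between $\cX$ and the basis of $\cL$ do the encoding work.

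Given an instance $(a_1,\ldots,a_n,T)\in\Z_+^{n+1}$, I would first build a bipartite graph $G^+$ with one $K_{2,2}$-gadget per item and a forced auxiliary gadget for a single bit $y\in\{0,1\}$, so that the perfect matchings of $G^+$ biject with pairs $(S,y)$ where $S\subseteq[n]$; maximum-weight matching being polynomial-time, this gives a polynomial-time COP. I would then design an edge-weight vector $b$ so that $b^{\top}\chi_{\mathrm{PM}(S,y)}=a(S)-Ty$, and take $\cL=\ker(b^{\top})$, a hyperplane whose basis of size $|E(G^+)|-1$ is computable in polynomial time. By construction, $\chi_{\mathrm{PM}(S,1)}\in\cL$ precisely when $S$ solves SUBSET SUM. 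Finally I would tune the linear objective $f$ so that the LSA optimum value $\max\{f(x):x\in\cX\setminus\cL\}$ changes according to whether SUBSET SUM is feasible.

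The main obstacle is that the most natural choice $f=y$ fails: non-SS matchings with $y=1$ trivially avoid $\cL$, so the LSA optimum is $1$ regardless of SUBSET SUM's answer. The remedy I would pursue is to enlarge $\cL$ by augmenting its basis with polynomially many extra vectors chosen so that every non-SS matching with $y=1$ falls inside $\cL$, leaving only the SS-solution matchings as witnesses for $y=1$ in $\cX\setminus\cL$. Engineering these augmenting vectors so that they absorb exactly the spurious $0/1$ witnesses --- without inadvertently pulling an SS-solution matching into $\cL$ --- is the technical crux and the part I expect to be hardest to pin down. Should this linear-algebraic balancing act prove too delicate, a robust fallback is to invoke Elkind, Goldberg, Goldberg, and Wooldridge's NP-hardness of the weighted voting game nucleolus~\cite{Elkind2009a} and apply the Relaxed MPS Scheme of Section~\ref{sec:schemes} contrapositively: a polynomial-time LSA algorithm would, via that scheme, compute the WVG nucleolus in polynomial time, contradicting that NP-hardness result.
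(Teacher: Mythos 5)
There is a genuine gap, and it is structural rather than a matter of missing details. In (\ref{prob:lsa}) the feasible set is $\cX\setminus\cL$, so the objects that are hard to \emph{find} must be the ones lying \emph{outside} the subspace; your encoding has the opposite polarity, since the SUBSET SUM solutions are exactly the matchings that land \emph{inside} $\cL=\ker(b^{\top})$. Your proposed repair --- augmenting the basis of $\cL$ until every non-solution matching with $y=1$ lies in $\cL$ while the solution matchings do not --- cannot succeed: the $2^{n}-k$ incidence vectors of the non-solution pairs $(S,1)$ already span the full linear span of all $2^{n}$ such incidence vectors (removing a few subsets from the hypercube does not drop the span), so any subspace containing all non-solutions necessarily contains the solutions as well. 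This is not a delicate balancing act that might be engineered; it is impossible for a linear subspace. The fallback via \cite{Elkind2009a} and Lemma~\ref{lemma:minimum-excess-to-nucleolus} also does not prove the statement as written: the lemma claims NP-hardness \emph{even when} (\ref{prob:comb-opt}) is polynomial-time solvable, but the minimum excess coalition problem for weighted voting games is itself NP-hard (it contains a minimum knapsack cover problem), so that route only exhibits hardness of (\ref{prob:lsa}) on top of an already intractable base problem, which is not the content of the lemma.

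The paper's proof respects the required polarity by reducing from Two Disjoint Directed Paths: it takes $\cX$ to be the incidence vectors of $s_{1}$--$t_{2}$ paths in a graph augmented with the arc $e=(t_{1},s_{2})$ (so (\ref{prob:comb-opt}) is a polynomial-time path problem), and sets $\cL=\{x: x_{e}=0\}$, a coordinate hyperplane with an explicit basis. A path avoids $\cL$ precisely when it uses $e$, i.e.\ precisely when the original instance admits the two disjoint paths, so deciding whether $\cX\setminus\cL\neq\emptyset$ is already NP-hard. If you want to salvage a SUBSET SUM--style reduction, you would need to re-engineer the gadget so that the target sum condition corresponds to a coordinate (or low-dimensional) constraint being \emph{violated} rather than satisfied; the coordinate-hyperplane trick from the paper is the cleanest way to do this.
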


\begin{proof}\label{proof:lemma-np-hard}
    We begin by considering the Two Disjoint Directed Paths (TDDP) problem. This well-known NP-complete problem gives as input a directed graph $G$ with two source nodes $s_1,s_2$ and two sink nodes $t_1,t_2$, and the problem is to decide if there exists a pair of arc disjoint paths, one from $s_1$ to $t_1$ and the other from $s_2$ to $t_2$.
    
    Add an arc $(t_1,s_2)$ if it does not already exist. Call the new graph $G'$. Observe that there are arc disjoint $s_1-t_1$ and $s_2-t_2$ paths in $G$ if and only if there is an $s_1-t_2$ path in $G'$ using the arc $(s_2,t_1)$ in $G'$. Let $\cX$ be the set of incidence vectors of $s_1-t_2$ paths in $G'$. The graph $G'$ with $s_1,s_2,t_1,t_2$ labelled may serve as a compact presentation of $\cX$. Let $c\in \R^{E(G')}$ be the all-ones vector. Furthermore, the corresponding problem (\ref{prob:comb-opt}) can be solved in polynomial time with respect to the encoding size of $G'$ and $c$.
    
    Suppose we have an oracle which can solve the corresponding instance of (\ref{prob:lsa}) in polynomial time for any linear subspace $\cL \subseteq\R^{E(G')}$. Consider the particular linear subspace 
    $$\cL:= \{x \in \R^{E(G')}: x_{(t_1,s_2)} = 0\}.$$
    Observe that there an $s_1-t_2$ path in $G'$ using arc $t_1-s_2$ if and only if $\cX\backslash \cL$ is nonempty. Using our oracle for (\ref{prob:lsa}) we can decide in polynomial time if this is the case.
    $\blacksquare$\end{proof}

Observe that when we formulate the minimum excess coalition problem for a cooperative game $(n,\nu)$ as a problem of the form (\ref{prob:comb-opt}) and we take $\cL = \opspan(V_{i-1})$ then (\ref{prob:lsa}) is the ellipsoid method separation problem for (\ref{lp:relaxed-maschler}), the $i$-th linear program in the relaxed MPS Scheme. This discussion yields the following easy lemma

\begin{lemma}\label{lemma:minimum-excess-to-nucleolus}
    If (\ref{prob:comb-opt}) is a minimum excess coalition problem of a cooperative game $(n,\nu)$ and one can solve the associated (\ref{prob:lsa}) for any $\cL$ in polynomial time then the nucleolus of $(n,\nu)$ can be computed in polynomial time.
\end{lemma}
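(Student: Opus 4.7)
The plan is to invoke the Relaxed MPS Scheme (as developed in Section~\ref{sec:schemes}) and to argue that each of its $O(n)$ iterations can be carried out in polynomial time using an oracle for (\ref{prob:lsa}). Recall that the $i$-th linear program in the scheme is (\ref{lp:relaxed-maschler}), whose inequalities are indexed precisely by those coalitions $S$ whose characteristic vector $\chi(S)$ avoids the linear subspace $\cL_i := \opspan(V_{i-1})$. The separation problem for the constraint $x(S)\geq \nu(S) + \epsilon$ at a tentative point $(x^\ast,\epsilon^\ast)$ thus reduces to deciding whether there is a coalition $S$ with $\chi(S)\notin\cL_i$ whose excess $x^\ast(S)-\nu(S)$ is less than $\epsilon^\ast$. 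This is exactly the associated (\ref{prob:lsa}) instance built from the minimum excess coalition problem at $x^\ast$ with forbidden subspace $\cL_i$. Hence the polynomial time oracle for (\ref{prob:lsa}) gives a polynomial time separation oracle for (\ref{lp:relaxed-maschler}), and the ellipsoid method~\cite{Khachiyan1980,Leung1989} solves $Q_i$ in polynomial time.

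Next I would handle the bookkeeping of maintaining the sets $V_i$. At the end of iteration $i$, one must exhibit a vector $v \in \fix(Q_i(\bar{\epsilon}_i))\setminus\opspan(V_{i-1})$ in order to extend $V_i := V_{i-1}\cup\{v\}$. By standard LP arguments the tight constraints of $Q_i$ at an optimal solution, together with the fixed constraints inherited from earlier iterations, span the lineality directions orthogonal to $Q_i(\bar{\epsilon}_i)$, so certainly some tight coalition $S$ of (\ref{lp:relaxed-maschler}) has $\chi(S)\notin\opspan(V_{i-1})$. Such an $S$ can be produced by one more call to the (\ref{prob:lsa}) oracle applied at any optimal $(x^\ast,\bar{\epsilon}_i)$: the returned coalition has minimum excess value $\bar{\epsilon}_i$, lies outside $\opspan(V_{i-1})$ by construction, and belongs to $\fix(Q_i(\bar{\epsilon}_i))$. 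If instead the returned excess strictly exceeds $\bar{\epsilon}_i$ then no such coalition exists; combined with $x\in Q_{i-1}(\bar{\epsilon}_{i-1})$ this forces $Q_i(\bar{\epsilon}_i)$ to be a single point, namely the nucleolus, and we terminate.

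Putting the pieces together, the whole Relaxed MPS Scheme runs in at most $n$ rounds by the lemma quoted at the end of Section~\ref{sec:schemes}, and each round performs one polynomial time ellipsoid run plus one additional (\ref{prob:lsa}) call for the fixed-set extraction, all of which are polynomial in the input size. The final linear program's feasible region is the singleton containing the nucleolus, completing the computation.

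The step I expect to be the main obstacle, or at least the one meriting the most care in a full write-up, is the fixed-set extraction: one must argue that a single (\ref{prob:lsa}) call at an optimal primal solution really does return a characteristic vector outside $\opspan(V_{i-1})$ that belongs to $\fix(Q_i(\bar{\epsilon}_i))$, and that failure of this property is exactly the termination criterion. Everything else (reduction of separation to (\ref{prob:lsa}), ellipsoid running time, the $O(n)$ bound on rounds) is a direct application of results already established in the paper.
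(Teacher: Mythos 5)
Your overall route is the same as the paper's: the paper proves this lemma only via the one-line observation immediately preceding it, namely that with $\cL=\opspan(V_{i-1})$ the problem (\ref{prob:lsa}) is exactly the ellipsoid separation problem for (\ref{lp:relaxed-maschler}), and then appeals to the Relaxed MPS Scheme's correctness and its bound of at most $n$ rounds. Your first and third paragraphs are a faithful (and more detailed) expansion of that. A small omission: separating over the inherited constraint $x\in Q_{i-1}(\bar{\epsilon}_{i-1})$ recursively requires (\ref{prob:lsa}) calls for all earlier subspaces $\opspan(V_{j-1})$ with the thresholds $\bar{\epsilon}_j$, which multiplies the work per separation call by a further factor of $O(n)$ but changes nothing qualitatively.

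The genuine gap is in your fixed-set extraction, exactly the step you flagged. You assert that a coalition $S$ returned by one (\ref{prob:lsa}) call at an optimal $x^\ast$ ``belongs to $\fix(Q_i(\bar{\epsilon}_i))$,'' but membership in $\fix$ requires $x(S)$ to be constant over the \emph{entire} optimal face, whereas the oracle only certifies that $S$ attains the minimum excess at the particular point $x^\ast$. If $x^\ast$ is a vertex of $Q_i(\bar{\epsilon}_i)$ (as an optimal solution produced by the ellipsoid method typically is), many non-fixed coalitions are tight there and the oracle may return one of them; adding such a $\chi(S)$ to $V_i$ would wrongly remove a still-active constraint from $Q_{i+1}$ and break the scheme's correctness. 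The standard repair, implicit in~\cite{Pashkovich2018}, is to restrict $Q_i$ to the polynomially many inequalities generated during the ellipsoid run, compute an optimal dual solution, and take any new constraint (one participating in the normalization $\sum_S y_S=1$ attached to $\epsilon$) with strictly positive dual value: by complementary slackness it is tight at every optimal solution, hence its coalition is fixed, and by construction $\chi(S)\notin\opspan(V_{i-1})$. Alternatively one can test each candidate tight coalition for fixedness by maximizing and minimizing $x(S)$ over $Q_i(\bar{\epsilon}_i)$ with two further LP solves. Relatedly, your termination test is off: at any optimal $x^\ast$ the oracle's value over coalitions outside $\opspan(V_{i-1})$ is always exactly $\bar{\epsilon}_i$ (were it strictly larger, $\epsilon$ could be increased), so the event ``returned excess strictly exceeds $\bar{\epsilon}_i$'' never occurs; termination must instead be detected by $Q_i(\bar{\epsilon}_i)$ having affine dimension $0$, i.e.\ by the failure to find any fixed $v$ outside $\opspan(V_{i-1})$.
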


\subsection{Reducing Linear Subspace Avoidance to Congruency-Constrainted Optimization}

\paragraph{}The goal of this subsection is to show the connection between solving (\ref{prob:lsa}) and solving congruency-constrained optimization. This connection was first drawn in the work of Pashkovich~\cite{Pashkovich2018} for the special case of weighted voting games. Here we abstract their work to apply it to our more general framework.

By the following lemma, we can restrict our attention from linear independence over $\R$ to linear independence over finite fields. We present the proof for completeness.

\begin{lemma}\label{lemma:finite-field}
(Pashkovich~\cite{Pashkovich2018}) Let $P$ be a set of prime numbers such that $|P| \geq \log_2(n!)$ with $n\geq 3$. A set of vectors $v_1,\dots, v_k \in \{0,1\}^n$ are linearly independent over $\R$ if and only if there exists $p \in P$ such that $v_1,\dots, v_k$ are linearly independent over $\F_p$.

Moreover, the set $P$ can be found in $O(n^3)$ time, and each $p$ in $P$ can be encoded in $O(\log(n))$ bits.
\end{lemma}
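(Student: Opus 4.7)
The plan is to prove the two implications separately and then address the complexity of constructing $P$. For the reverse direction---that independence over some $\F_p$ with $p\in P$ implies independence over $\R$---I would argue by contrapositive. If the $0/1$ vectors $v_1,\dots,v_k$ are dependent over $\R$, then since their entries are rational, rank over $\R$ agrees with rank over $\Q$ and there is a $\Q$-linear dependence. Clearing denominators and dividing by the gcd of the resulting coefficients produces integers $a_1,\dots,a_k$ with $\gcd(a_1,\dots,a_k)=1$ and $\sum_i a_i v_i = 0$. Reducing modulo any prime $p$, coprimality guarantees some $a_i \not\equiv 0 \pmod p$, so the vectors are dependent over every $\F_p$. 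Independence over any single $\F_p$ therefore forces independence over $\R$.

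The forward direction is the substantive half. Assuming independence of $v_1,\dots,v_k$ over $\R$, some $k\times k$ submatrix $M\in\{0,1\}^{k\times k}$ of the matrix with these columns has $\det(M)\neq 0$. By the Leibniz expansion applied to a $0/1$ matrix, $|\det(M)|\leq k!\leq n!$. It suffices to find $p\in P$ not dividing $\det(M)$, because then $M$ remains invertible over $\F_p$ and the vectors are $\F_p$-independent. Suppose for contradiction every $p\in P$ divides $\det(M)$. Since the primes in $P$ are distinct, their product divides $\det(M)$, giving $\prod_{p\in P} p \leq |\det(M)| \leq n!$. On the other hand, among $|P|$ distinct primes at most one can equal $2$, so $\prod_{p\in P} p \geq 2\cdot 3^{|P|-1} > 2^{|P|} \geq 2^{\log_2(n!)} = n!$, a contradiction.

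For the ``moreover'' claim I would use Stirling's approximation $\log_2(n!) = O(n\log n)$ together with the prime number theorem, which places the first $O(n\log n)$ primes below some bound of order $O(n\log^2 n)$. A sieve of Eratosthenes up to this polynomial bound therefore constructs $P$ well within the stated $O(n^3)$ budget, and each prime in $P$ fits in $O(\log n)$ bits. The main (though mild) obstacle is that the hypothesis $|P|\geq \log_2(n!)$ is not strict: it only yields $\prod_{p\in P} p \geq 2^{|P|}$ in a naive bound, so the proof must exploit that $n\geq 3$ forces $|P|\geq 3$ and that distinct primes larger than $2$ contribute a factor of at least $3$ each. This careful bookkeeping is what upgrades the bound to strict and guarantees a prime in $P$ avoiding $\det(M)$.
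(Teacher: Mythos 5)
Your proof is correct and follows essentially the same route as the paper: both directions reduce to a nonzero $k\times k$ minor, bounded in absolute value by $n!$, which cannot be divisible by all of the $|P|\geq\log_2(n!)$ distinct primes since their product exceeds $n!$. Your only real departures are proving the easy direction via a primitive integer dependence relation rather than a nonzero minor, and explicitly justifying the strict inequality $\prod_{p\in P}p>n!$ (via $2\cdot 3^{|P|-1}>2^{|P|}$), a step the paper asserts without proof --- your bookkeeping there is a genuine improvement in rigor.
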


\begin{proof}\label{proof:lemma:finite-field}
    (~\cite{Pashkovich2018})
    Let $A$ be the $n\times k$ matrix whose $i^\text{th}$ column is vector $v_i$. If $v_1,\dots,v_k$ are linearly independent over $\F_p$ then there exists a $k\times k$ submatrix $B$ of $A$ such that $\det(B) \neq 0$ (over $\F_p$). Then $\det(B) \neq 0$ over $\R$ and hence $v_1,\dots, v_k$ are linearly independent over $\R$.
    
    \paragraph{}
    Now suppose that $v_1,\dots, v_k$ are linearly dependent over $\F_p$ for all $p \in P$. If $k > n$ then clearly $v_1,\dots, v_k$ are linearly dependent over $\R$ and we are done. So suppose that $k \leq n$. Let $B$ be a $k\times k$ submatrix of $A$. We will show that $\det(B) = 0$.
    
    For each $p \in P$, $p$ divides $\det(B)$ since $v_1,\dots, v_k$ are linearly dependent over $\F_p$. Note that $\det(B)$ is an integer since it is the determinant of a $0$-$1$ matrix. Since each $p \in P$ is prime, this implies that 
    $$\prod_{p \in P} p \mid \det(B),$$
    and hence, if $\det(B) \neq 0$ then
    $$\prod_{p \in P}p \leq \det(B).$$
    But in this case,
    $$2^{\log_2(n!)} < \prod_{p \in P} p \leq \det(B) \leq n!,$$
    with the last inequality following since $B$ is a $0$-$1$ matrix, yielding a contradiction. Therefore $\det(B) = 0$ as desired.

    Now generate a set $P$ of primes such that $|P| \geq \log_2(n!).$ By the Prime Number Theorem, we can simply find the first $\log_2(n!)$ primes in $O(n^3)$ time and thus we can construct $P$ in time polynomial in $n$. Furthermore, the value of each prime in $P$ will be polynomial in $n$ (i.e.\ each prime in $P$ can be encoded with $O(\log(n))$ bits). 
    $\blacksquare$
\end{proof}

This lemma enables us to reduce the problem (\ref{prob:lsa}) to the problem of computing (\ref{prob:comb-opt}) subject to a congruency constraint with respect to a given prime $p$, $k \in \Z_p$, $v \in Z^E_p$:
\begin{equation}\tag{$P_{\cL,p,v,k}$}\label{prob:lsa-3}
    \max\{f(x):x \in \cX, v^Tx = k \mod p\}.
\end{equation}
\begin{lemma}\label{lemma:subspace-to-congruency}
    If one can solve (\ref{prob:lsa-3}) in time $T$ then one can solve (\ref{prob:lsa}) in time $O(n^6T)$.
\end{lemma}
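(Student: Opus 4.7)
The plan is to reduce the linear subspace avoidance problem (\ref{prob:lsa}) to polynomially many instances of the congruency-constrained problem (\ref{prob:lsa-3}). I would begin by fixing a basis $b_1,\ldots,b_m \in \{0,1\}^n$ of $\cL$ (available in the intended application, where $\cL$ is the span of a set of incidence vectors) and invoking Lemma~\ref{lemma:finite-field} to produce the prime set $P$, of size $O(\log_2 n!) = O(n\log n)$, in $O(n^3)$ time.

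The reformulation rests on applying Lemma~\ref{lemma:finite-field} to the augmented family $\{b_1,\ldots,b_m,x\}$: a vector $x \in \cX$ lies outside $\cL$ iff this family is $\R$-linearly independent, iff some $p \in P$ renders it $\F_p$-linearly independent. The latter decomposes into two simultaneous conditions, namely that $b_1,\ldots,b_m$ are $\F_p$-linearly independent (testable in polynomial time by Gaussian elimination mod $p$), and that $x \notin \cL_p := \opspan_{\F_p}(b_1,\ldots,b_m)$. For each prime $p$ passing the independence test, I would compute an $\F_p$-basis $v_1^{(p)},\ldots,v_{n-m}^{(p)}$ of the orthogonal complement $\cL_p^\perp$, so that $x \notin \cL_p$ becomes equivalent to the existence of an index $j$ and residue $k \in \{1,\ldots,p-1\}$ with $(v_j^{(p)})^\top x \equiv k \pmod p$.

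The algorithm enumerates every triple $(p, v_j^{(p)}, k)$, invokes the (\ref{prob:lsa-3})-oracle for each, and returns the best $f$-value among all feasible oracle outputs. One direction of correctness is immediate: an optimal $x^\ast \in \cX \setminus \cL$ witnesses such a triple and is thus considered. For the other direction, a returned $x$ satisfies a nontrivial congruency modulo a prime $p$ that preserves $\F_p$-independence of the $b$'s, and a short divisibility argument -- write any hypothetical $y \in \cL \cap \{0,1\}^n$ as $y = \sum\alpha_i b_i$ over $\R$, clear rational denominators, and repeatedly cancel factors of $p$ using $\F_p$-independence of the $b_i$ -- shows that $y$ reduces mod $p$ into $\cL_p$, so $v_j^\top y \equiv 0 \pmod p$; hence any oracle output $x$ must lie outside $\cL$. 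Finally, $|P| = O(n\log n)$ primes, $n-m \leq n$ orthogonal vectors per prime, and $p-1 = \mathrm{poly}(n)$ residues per vector yield at most $\tilde O(n^3)$ oracle calls, comfortably within the $O(n^6)$ budget and dominating the polynomial preprocessing.

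The step I expect to be the main obstacle is this forward correctness direction: showing that reduction mod $p$ sends $\cL \cap \{0,1\}^n$ into $\cL_p$ whenever $b_1,\ldots,b_m$ remain $\F_p$-independent. Over $\F_p$ the basis may accidentally collapse and the naive analogue $\cL_p$ cease to have the intended dimension, so one must single out the safe primes -- Lemma~\ref{lemma:finite-field} ensures that for any $x^\ast \notin \cL$ at least one $p \in P$ is safe, while the divisibility argument handles rational coefficients on safe primes.
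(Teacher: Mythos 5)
Your proposal is correct and follows essentially the same route as the paper's proof of Lemma~\ref{lemma:subspace-to-congruency}: invoke Lemma~\ref{lemma:finite-field} to replace non-membership in $\cL$ by $\F_p$-linear independence for some $p\in P$, compute a basis of the orthogonal complement modulo $p$ by Gaussian elimination, and enumerate the $O(|P|\cdot n\cdot\max_{p\in P}p)$ triples $(p,v,k)$, calling the (\ref{prob:lsa-3}) oracle once per triple and returning the best feasible answer. The one place you go beyond the paper---restricting to ``safe'' primes at which $b_1,\dots,b_m$ remain $\F_p$-independent, and giving the denominator-clearing argument that $\cL\cap\{0,1\}^n$ then reduces into $\opspan_{\F_p}(b_1,\dots,b_m)$---is a genuine refinement rather than a different route: the paper simply asserts that $x\notin\opspan(\cL)$ over $\F_p$ iff $v^Tx\not\equiv 0\pmod p$ for some $v\in B_p$, an equivalence that can fail at primes where the basis collapses (a point of $\cL$ may then satisfy a nontrivial congruence), so your extra care is exactly what is needed to make the soundness direction of the reduction airtight.
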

\begin{proof}\label{proof:lemma:subspace-to-congruency}
    For a linear subspace $\cL \subseteq \R^n$, we say that $x \in \opspan(\cL)$ over $\F_p$ if there exists a basis $B$ of $\cL$ such that $B\cup\{x\}$ is linearly dependent over the field $\F_p$. By Lemma \ref{lemma:finite-field} we can solve (\ref{prob:lsa}) by solving, $$\max\{f(x): x \in \cX, x \not\in\opspan(\cL) \text{ over $\F_p$ for some $p \in P$}\}$$
    The above problem can be solved by solving for each $p \in P$,
    \begin{equation}\tag{$P_{\cL,p}$}\label{prob:lsa-1}
        \max\{f(x): x \in \cX, x \not\in\opspan(\cL)\text{ over $\F_p$}\}
    \end{equation}
    and taking the solution of maximum objective value found. Let $B_p$ be a basis of $\cL^\perp$ over $\F_p$. Assuming $\cL$ is presented to us through a fixed basis, we can compute $B_p$ in polynomial time via Gaussian Elimination~\cite{Edmonds1967}. Now, $x$ is not in the span of $\cL$ over $\F_p$ if and only if there exists $v \in B_p$ such that $v^Tx \neq 0 \mod p.$
    Hence we can solve (\ref{prob:lsa-1}) by solving for each $v \in B_p$,
    \begin{equation}\tag{$P_{\cL,p,v}$}\label{prob:lsa-2}
        \max\{f(x): x \in \cX, v^Tx\neq 0 \mod p\}
    \end{equation}
    and taking the solution of maximum value found. Now $v^Tx \neq 0\mod p$ if and only if there exists $k \in [p-1]$ such that $v^Tx = k$ (over $\F_p$). Thus we can solve (\ref{prob:lsa-2}) by solving for each $k \in [p-1]$, (\ref{prob:lsa-3}) and taking the solution of maximum value found. 
    
    Hence by solving $O(|P|(n-\dim(\cL))\max_{p\in P}p)=O(n^3\max_{p \in P}p) = O(n^6)$ congruency-constrained optimization problems of the form (\ref{prob:lsa-3}) we can solve (\ref{prob:lsa}).
    $\blacksquare$
\end{proof}

\section{Dynamic Programming}\label{sec:dynamic-programming}
\paragraph{}
Our goal is to define a class of problems where tractability of (\ref{prob:comb-opt}) can be lifted to tractability of (\ref{prob:lsa-3}) and hence via Lemma~\ref{lemma:subspace-to-congruency} to (\ref{prob:lsa}). Our candidate will be problems which have a dynamic programming formulation. The model of dynamic programming we propose is based on the model of Martin, Rardin, and Campbell~\cite{martin1990polyhedral}.

The essence of a dynamic programming solution to a problem is a decomposition of a solution to the program into optimal solutions to smaller subproblems. We will use a particular type of hypergraph to describe the structure of dependencies of a problem on its subproblems. 

To begin we will need to introduce some concepts. A \emph{directed hypergraph} $H=(V,E)$ is an ordered pair, where $V$ is a finite set referred to as the \emph{vertices} or \emph{nodes} of the hypergraph, and $E$ is a finite set where each element is of the form $(v,S)$ where $S\subseteq V$ and $v \in V\backslash S$. We refer to the elements of $E$ as \emph{edges} or \emph{arcs} of $H$. For an arc $e=(v,S) \in E$ we call $v$ the \emph{tail} of $e$ and say $e$ is \emph{outgoing} from $v$. We call $S$ the \emph{heads} of $e$, call each $u \in S$ a \emph{head} of $e$, and say $e$ is \emph{incoming} on each $u \in S$. We call vertices with no incoming arcs \emph{sources} and we call vertices with no outgoing arcs \emph{sinks}. For a directed hypergraph $H$, the set $L(H)$ denotes the set of sinks of $H$.

For any non-empty strict subset of vertices $U\subset V$, we define the \emph{cut} induced by $U$, denoted $\delta(U)$, as follows
$$\delta(U) := \{(v,S) \in E: v \in U \text{ and } S\cap (V\backslash U)\neq \emptyset\}.$$
We say a directed hypergraph is \emph{connected} if it has no empty cuts.

A \emph{directed hyperpath} is a directed hypergraph $P$ satisfying the following:
\begin{itemize}
\item there is a unique vertex $s\in V(P)$ identified as the \emph{start} of $P$,
\item the start $s$ is the tail of at most one arc of $P$, and the head of no arcs of $H$,
\item every vertex in $V(P)\backslash \{s\}$ is the tail of precisely one arc of $H$,
\item $P$ is connected.
\end{itemize}
Observe that there is at least one, and potentially many, vertices of a path which have one incoming arc and no outgoing arcs. These vertices we call the \emph{ends} of the path. If there is a path starting from a vertex $u$ and ending with a vertex $v$ then we say $u$ is an \emph{ancestor} to $v$ and $v$ is a \emph{descendant} of $u$. For any vertex $v \in V(H)$, the subgraph of $H$ \emph{rooted at} $v$, denoted $H_v$, is the subgraph of $H$ induced by the descendants of $v$ (including $v$).

We say that a directed hypergraph $H=(V,E)$ is \emph{acyclic} if there exists a topological ordering of the vertices of $H$. That is to say, there exists a bijection $t:V\rightarrow [\abs{V}]$ such that for every $(v,S) \in E$, for each $u \in S$,
$t(v) < t(u).$

A common approach to dynamic programming involves a table of subproblems (containing information pertaining to their optimal solutions), and a recursive function describing how to compute an entry in the table based on the values of table entries which correspond to smaller subproblems. The values in the table are then determined in a bottom-up fashion. In our formal model, the entries in the table correspond to vertices of the hypergraph, and each hyperarc $(v,S)$ describes a potential way of computing a feasible solution to the subproblem at $v$ by composing the solutions to the subproblems at each node of $S$.

Consider a problem of the form (\ref{prob:comb-opt}). That is, we have a feasible region $\cX \subseteq \R^n$ and an objective function $f:\cX \rightarrow \R$ and we hope to maximize $f(x)$ subject to $x \in \cX$. We need some language to describe how solutions to the dynamic program, i.e.\ paths in the directed hypergraph, will map back to solutions in the original problem space. To do this mapping back to the original space we will use an affine function. A function $g: \R^m \rightarrow \R^n$ is said to be \emph{affine} if there exists a matrix $A \in \R^{n\times m}$ and a vector $b \in \R^n$ such that for any $x \in \R^m$, $g(x) = Ax + b$. 

Oftentimes an affine function $g$ will have a domain $R^E$ indexed by a finite set $E$. When this happens for any $S \subseteq E$ we use $g(S)$ as a shorthand for $g(\chi(S))$ where $\chi(S)$ is the incidence vector of $S$. We further shorten $g(\{e\})$ to $g(e)$.

\begin{definition}Let $H=(V,E)$ be a directed acyclic connected hypergraph  with set of sources $T$. Let $\cP(H)$ denote the set of paths in $H$ which begin at a source in $T$ and end only at sinks of $H$. Let $g: \R^E\rightarrow \R^{n}$ be an affine map which we will use to map between paths in $\cP(H)$ and feasible solutions in $\cX$. Let $c:\R^E \rightarrow \R$ be an affine function we will use as an objective function. We say $(H,g,c)$ is a \emph{dynamic programming formulation} for (\ref{prob:comb-opt}) if $g(\cP(H)) = \cX$, and moreover for any $x \in \cX$, 
$$ f(x) = \max_{P \in g^{-1}(x)} c(P).$$
\end{definition}
In other words, the optimal values of  
\begin{equation}\tag{DP}\label{prob:dp-paths}
    \max\{c(P): P \in \cP(H)\}
\end{equation}
and (\ref{prob:comb-opt}) are equal, and the feasible region of (\ref{prob:comb-opt}) is the image (under $g$) of the feasible region of (\ref{prob:dp-paths}). The \emph{size} of a dynamic programming formulation is the number of arcs in $E(H)$.

In~\cite{martin1990polyhedral} the authors show that (\ref{prob:dp-paths}) has a totally dual integral extended formulation of polynomial size. Thus they show that (\ref{prob:dp-paths}) can be solved in polynomial time via linear programming. They further show that the extreme point optimal solution of this extended formulation lies in $\{0,1\}^{E}$ under the following \emph{reference subsets} condition: there exists a ground set $I$, and nonempty subsets $I_v\subseteq I$ for each vertex $v \in V(H)$ satisfying
\begin{enumerate}
\item $I_j \subseteq I_\ell$ for all $(\ell, J) \in H$ such that $j \in J$
\item and $I_j \cap I_{j'} = \emptyset$ for all $(\ell, J) \in E(H)$ such that $j,j' \in J$ with $j\neq j'$.
\end{enumerate}

This condition is equivalent to the following \emph{no common descendants} condition: for each $(\ell, J) \in E(H)$ for all $u\neq v \in J$, there does not exist $w \in V(H)$ such that $w$ is a descendant of both $u$ and $v$.

\begin{lemma}\label{lemma:no-common-descendants}
    For any directed acyclic hypergraph $H=(V,E)$ the reference subsets condition is equivalent to the ``no common descendants'' condition defined above.
\end{lemma}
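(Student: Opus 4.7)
The plan is to prove the equivalence by establishing both implications, using the descendants of each vertex as a canonical choice of reference subsets in the reverse direction.

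For the forward direction, I would assume that reference subsets $\{I_v\}_{v \in V}$ satisfying conditions 1 and 2 exist, and derive the no common descendants condition by contradiction. The key intermediate claim is a monotonicity lemma: if $w$ is a descendant of $u$, then $I_w \subseteq I_u$. This is proved by induction on the length of a hyperpath from $u$ to $w$. The base case $w=u$ is trivial; otherwise the path starts with an arc $(u,J')$ and $w$ is a descendant of some $j \in J'$, so by the inductive hypothesis $I_w \subseteq I_j$, and by condition 1, $I_j \subseteq I_u$. Now suppose there is an arc $(\ell,J)$ with distinct $u,v \in J$ sharing a common descendant $w$. The monotonicity lemma gives $I_w \subseteq I_u \cap I_v$. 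Since $I_w$ is nonempty, this contradicts condition 2 applied to the arc $(\ell,J)$.

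For the reverse direction, I would give an explicit construction assuming the no common descendants condition. Set $I := V(H)$, and for each $v \in V$ let $I_v$ be the set of descendants of $v$ (which includes $v$ itself under the convention used in the paper). Each $I_v$ is nonempty because $v \in I_v$. Condition 1 holds because for any arc $(\ell,J)$ and any $j \in J$, every descendant of $j$ is reached by extending the arc from $\ell$ to $j$ followed by a hyperpath from $j$, hence is also a descendant of $\ell$. Condition 2 is precisely the no common descendants hypothesis: if $w \in I_j \cap I_{j'}$ for distinct $j,j' \in J$, then $w$ would be a common descendant of $j$ and $j'$, contradicting the assumption.

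I expect neither direction to be genuinely difficult once the monotonicity lemma is isolated; the main thing to be careful about is the convention that a vertex is its own descendant (so that $I_v$ is nonempty and the edge case $w = u$ of the monotonicity induction is handled), and to check that a descendant of $j \in J$ is always a descendant of $\ell$ when $(\ell,J) \in E$, which follows immediately from concatenating the arc $(\ell,J)$ with a hyperpath rooted at $j$ and verifying the directed hyperpath axioms are preserved.
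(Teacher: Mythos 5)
Your proposal is correct and follows essentially the same route as the paper: the forward direction rests on the same monotonicity claim ($w$ a descendant of $u$ implies $I_w \subseteq I_u$), which the paper proves via a minimal counterexample and you prove by induction on path length -- structurally the same argument -- and the reverse direction uses the identical construction $I_v = \{\text{descendants of } v\}$ with the same three verifications. No gaps.
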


\begin{proof}\label{proof:lemma:no-common-descendants}
    First suppose that $H$ does not satisfy the \emph{no common descendants} condition. Then there exists $(\ell, J) \in E$ such that there exist $u, v \in J$ and $w \in V$ such $w$ is a descendant of $u$ and of $v$. Suppose for a contradiction that $H$ has a \emph{reference subset} system with ground set $I$. 
    
    We claim that $I_w \subseteq I_v$. The proof of this claim will symmetrically show that $I_w \subseteq I_u$. Then $I_w \cap I_v \supset I_w \neq \emptyset$ violating the second property of a reference subsset system.
    
    To prove the claim we will prove something stronger. In particular we will show that for any $x,y \in V$ such that $y$ is a descendant of $x$, we have that $I_y\subseteq I_x$. Suppose not. Choose a counterexample $x,y$ with path $P$ starting at $x$ and ending at $y$ so that the number of edges in $P$ is minimal. Clearly $|E(P)|\neq 0$ as otherwise $x=y$. Now, from the definition of $P$ there exists an arc $(x,J) \in E(P)$ and there exists $z \in J$ such that there is a subgraph of $P$, denoted $P'$, such that $P'$ is a path starting at $z$ and ending at $y$. By minimality, $I_y \subseteq I_z$. By the first property of reference subset systems, $I_z \subseteq I_x$. Thus $I_y \subseteq I_x$ contradicting that $x,y$ and $P$ form a counterexample.
    
    Now for the other direction of the equivalence suppose that $H$ satisfies the \emph{no common descendants} condition. We will construct a reference subset system for $H$ as follows. Let $I = V$ and for each $v \in V$ let $I_v$ be the set of descendants of $v$. Then $I$ satisfies the first property of a reference subset system since the descendant relation is transitive. Further, the \emph{no common descendants} condition implies that $I$ satisfies the second property of a reference subset system. Lastly, no $I_v$ is empty since every vertex is their own descendant.
    $\blacksquare$
\end{proof}

We say that a dynamic programming formulation $(H,g,c)$ of a problem (\ref{prob:comb-opt}) is integral if $H$ satisfies the no common descendants condition. By the preceding discussion we have the following lemma
\begin{lemma}\label{lemma:polytime-dp}
    If a problem (\ref{prob:comb-opt}) has an integral dynamic programming formulation $(H,g,c)$ then (\ref{prob:comb-opt}) can be solved in time polynomial in the encoding of $(H,g,c)$.
\end{lemma}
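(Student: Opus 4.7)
The plan is to reduce (\ref{prob:comb-opt}) to the linear program (\ref{prob:dp-paths}) via the dynamic programming formulation $(H,g,c)$, and then invoke the extended formulation of Martin, Rardin, and Campbell~\cite{martin1990polyhedral} to solve (\ref{prob:dp-paths}) in polynomial time while guaranteeing integrality.

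First I would write down the natural LP relaxation of (\ref{prob:dp-paths}) using variables $y_e$ for $e \in E(H)$ indicating whether arc $e$ is used in a hyperpath: for each non-source vertex $v$ one imposes that $v$ is ``used'' by at most one outgoing arc, and that if any arc with $v$ as a head is used then an outgoing arc from $v$ must also be used. Following~\cite{martin1990polyhedral}, this yields a polynomial-size extended formulation (polynomial in $|E(H)|$) for the set of hyperpaths rooted at the sources and ending at sinks of $H$, with linear objective $c$. Since $(H,g,c)$ is given as input and $c$ is affine in $y$, this LP has size polynomial in the encoding of $(H,g,c)$ and can be solved in polynomial time by any standard LP algorithm (e.g., the Ellipsoid Method or an interior point method).

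Next I would argue integrality. By hypothesis $(H,g,c)$ is integral, i.e.\ $H$ satisfies the no common descendants condition, so by Lemma~\ref{lemma:no-common-descendants} $H$ admits a reference subset system. The main theorem of~\cite{martin1990polyhedral} then states that the extended formulation above is totally dual integral, and in particular every extreme point optimal solution lies in $\{0,1\}^{E(H)}$. Hence solving the LP and returning an optimal extreme point (which standard LP solvers produce) yields the incidence vector of a hyperpath $P^\star \in \cP(H)$ maximizing $c(P)$ over $\cP(H)$.

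Finally I would translate $P^\star$ back into a solution of (\ref{prob:comb-opt}). By the definition of a dynamic programming formulation, $g(\cP(H)) = \cX$ and for every $x \in \cX$, $f(x) = \max_{P \in g^{-1}(x)} c(P)$. Therefore
\[
\max_{x \in \cX} f(x) \;=\; \max_{x \in \cX}\max_{P \in g^{-1}(x)} c(P) \;=\; \max_{P \in \cP(H)} c(P) \;=\; c(P^\star),
\]
and $g(P^\star) \in \cX$ attains this maximum. Since $g$ is affine and given as part of the input, $g(P^\star)$ is computable in polynomial time, completing the reduction. The only delicate step in this plan is citing the right result from~\cite{martin1990polyhedral}; the argument is otherwise a direct unpacking of definitions once integrality of the extended formulation has been established via Lemma~\ref{lemma:no-common-descendants}.
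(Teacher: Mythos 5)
Your proposal is correct and follows essentially the same route as the paper, which proves this lemma implicitly by the discussion preceding it: cite the polynomial-size totally dual integral extended formulation of~\cite{martin1990polyhedral} for (\ref{prob:dp-paths}), use Lemma~\ref{lemma:no-common-descendants} to translate the no common descendants condition into the reference subsets condition guaranteeing $\{0,1\}$-valued extreme point optima, and map the resulting hyperpath back through $g$. Your write-up merely makes explicit the final step (that $\max_{x\in\cX} f(x) = \max_{P\in\cP(H)} c(P)$ and that $g(P^\star)$ is an optimizer), which the paper leaves to the definition of a dynamic programming formulation.
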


\subsection{Congruency Constrained Dynamic Programming}\label{sec:congruency-constrained-dp}

In this subsection our goal is to show that when a problem of the form (\ref{prob:comb-opt}) has a dynamic programming formulation, then its congruency constrained version (\ref{prob:lsa-3}) has a dynamic programming formulation that is only a $O(p^3)$  factor larger than the formulation for the original problem. This will prove Theorem~\ref{th:congruency}.

We begin with a handy lemma for constructing dynamic programming formulations of combinatorial optimization problems.
\begin{lemma}\label{lemma:composition}
    If $(H,g,c)$ is a dynamic programming formulation for (\ref{prob:comb-opt}) and $(H',g',c')$ is a dynamic programming formulation for (\ref{prob:dp-paths}) with respect to hypergraph $H$ and costs $c$ then $(H', g\circ g', c')$ is a dynamic programming formulation for (\ref{prob:comb-opt}).
\end{lemma}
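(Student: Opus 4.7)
The plan is to unfold the definitions and verify the two conditions of a dynamic programming formulation directly. Recall that $(H,g,c)$ being a DP formulation for (\ref{prob:comb-opt}) means that $g(\cP(H)) = \cX$ and that for every $x \in \cX$, $f(x) = \max_{P \in g^{-1}(x)} c(P)$. Likewise, $(H',g',c')$ being a DP formulation for the problem (\ref{prob:dp-paths}) associated with $(H,c)$ means that the feasible region of (\ref{prob:dp-paths}), namely $\cP(H)$, equals $g'(\cP(H'))$, and for every $P \in \cP(H)$ the objective $c(P)$ equals $\max_{P' \in (g')^{-1}(P)} c'(P')$.

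First I would verify the image condition. Since $g' : \R^{E(H')} \to \R^{E(H)}$ satisfies $g'(\cP(H')) = \cP(H)$ and $g : \R^{E(H)} \to \R^n$ satisfies $g(\cP(H)) = \cX$, we get
\[
  (g \circ g')(\cP(H')) = g\bigl(g'(\cP(H'))\bigr) = g(\cP(H)) = \cX,
\]
as required. Note that $g \circ g'$ is affine because the composition of affine maps is affine.

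Next I would verify the objective condition. Fix any $x \in \cX$. The preimage of $x$ under $g \circ g'$ decomposes as $(g \circ g')^{-1}(x) = \bigcup_{P \in g^{-1}(x)} (g')^{-1}(P)$, and this union is taken over the subset of $\cP(H)$ mapping to $x$ (since $g'$ restricted to $\cP(H')$ has image $\cP(H)$, only paths $P \in \cP(H)$ have non-empty preimage inside $\cP(H')$). Therefore
\begin{align*}
  \max_{P' \in (g \circ g')^{-1}(x) \cap \cP(H')} c'(P')
  &= \max_{P \in g^{-1}(x)} \ \max_{P' \in (g')^{-1}(P) \cap \cP(H')} c'(P') \\
  &= \max_{P \in g^{-1}(x)} c(P) = f(x),
\end{align*}
where the second equality uses that $(H',g',c')$ is a DP formulation for (\ref{prob:dp-paths}) and the third uses that $(H,g,c)$ is a DP formulation for (\ref{prob:comb-opt}).

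There is no real obstacle here; the only subtlety is to be careful that all maxima and preimages are implicitly taken over the path sets $\cP(H)$ and $\cP(H')$ (rather than over all of $\R^{E(H)}$ or $\R^{E(H')}$), so that the two DP formulation hypotheses can be chained cleanly. Once this bookkeeping is in place, the composition is immediate from the definitions.
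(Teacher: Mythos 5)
Your proof is correct and follows the same direct unfolding of definitions as the paper. In fact you are slightly more careful on the objective condition: the paper's proof asserts the pointwise identities $f(g\circ g'(P)) = c(g'(P)) = c'(P)$ for every $P \in \cP(H')$, which are stronger than what the definition guarantees (only the maxima over preimages are pinned down), whereas your decomposition $(g \circ g')^{-1}(x) = \bigcup_{P \in g^{-1}(x)} (g')^{-1}(P)$ and the chaining of the two maxima is exactly the right way to make that step rigorous.
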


\begin{proof}\label{proof:lemma:composition}
    The function $g\circ g'$ is map between $\cP(H')$ and $\cX$, and moreover it is affine since both $g$ and $g'$ are affine. Furthermore,
    $$g\circ g'(\cP(H')) = g(\cP(H)) = \cX.$$
    Finally, for any $P \in \cP(H')$,
    $$ f(g\circ g'(P)) = c(g'(P)) = c'(P).$$
$\blacksquare$
\end{proof}

\paragraph{}
Consider a directed hypergraph $H=(V,E)$ and an edge $(u,S) \in E$. For $v \in S$ we define the \emph{hypergraph obtained from the subdivision of }$(u,S)$ \emph{with respect to }$v$ to be the hypergraph $H'=(V',E')$ where $V' = V \dot\cup \{b_v\}$ for a new dummy vertex $b_v$ and
$$E' = \left(E\backslash\{(u,S)\}\right) \cup \{(u,\{v,b_v\}), (b_v, S\backslash\{v\})\}.$$
That is, $H'$ is obtained from $H$ by replacing edge $(u,S)$ with two edges: $(u, \{v,b_v\})$ and $(b_v, S\backslash\{v\})$. We call the edges $(u, \{v,b_v\})$ and $(b_v, S\backslash\{v\})$ the \emph{subdivision} of edge $(u,S)$.

\begin{lemma}\label{lemma:subdivision}
    Let $H=(V,E)$ be a directed acyclic hypergraph and let $H'=(V',E')$ be the directed acyclic hypergraph obtained via a subdivision of $(u,S) \in E$ with respect to $v\in S$. Then there is an affine function $g: \R^{E'}\rightarrow \R^E$, such that for any affine function $c:\R^E\rightarrow\R$, there exists an affine function $c':\R^{E'}\rightarrow \R$ such that $(H,g,c')$ is a dynamic programming formulation of the problem (\ref{prob:dp-paths}) on $H$ with objective $c$.

    Moreover if $H$ satisfies the ``no common descendants'' property, this dynamic programming formulation is integral.
\end{lemma}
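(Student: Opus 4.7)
The plan is to make $g$ simply ``undo'' the subdivision at the level of incidence vectors. Since $E\backslash\{(u,S)\}\subseteq E'$, I define $g:\R^{E'}\to\R^E$ coordinatewise by $g(x)_e = x_e$ for every surviving arc $e\in E\backslash\{(u,S)\}$ and $g(x)_{(u,S)} = x_{(u,\{v,b_v\})}$; this is manifestly affine (in fact linear). Given any affine $c:\R^E\to\R$ I simply set $c' := c\circ g:\R^{E'}\to\R$, which is affine as a composition, and by construction $c'(\chi(P')) = c(g(\chi(P')))$ for every $P'\in\cP(H')$. Thus the only substantive task is to show that $g$ carries $\cP(H')$ onto $\cP(H)$ in a value-preserving way. (I read the formulation in the statement as $(H',g,c')$ rather than $(H,g,c')$, since a coordinate of $\R^{E'}$ cannot directly be an objective on paths of $H$.)

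Next I exhibit the bijection $\Phi:\cP(H)\to\cP(H')$ that $g$ inverts at the incidence-vector level. Given $P\in\cP(H)$ not using $(u,S)$, every arc of $P$ lies in $E'$ and I let $\Phi(P)$ be that same collection of arcs; if $(u,S)\in E(P)$ I obtain $\Phi(P)$ by deleting $(u,S)$, inserting the new vertex $b_v$, and adding the two subdivision arcs. A direct check of the four axioms of a directed hyperpath confirms $\Phi(P)\in\cP(H')$: the start of $P$ is preserved and is still a source of $H'$ (the only new head vertex introduced in $H'$ is $b_v$); $u$ remains the tail of exactly one arc, namely $(u,\{v,b_v\})$; $b_v$ is the tail of exactly one arc, namely $(b_v,S\backslash\{v\})$; and the old connection from $u$ through $S$ is replaced by the chain through $b_v$, preserving connectivity and acyclicity. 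The inverse $\Phi^{-1}$ contracts the subdivision. The key observation for surjectivity is that in any $P'\in\cP(H')$, either both subdivision arcs appear or neither does: if $b_v\in V(P')$ then $b_v$ is not the start (because $b_v$ has an incoming arc in $H'$), so by the axioms $b_v$ is the tail of precisely one arc, forcing $(b_v,S\backslash\{v\})\in P'$; and connectivity of $P'$ forces some arc into the $b_v$-side of any cut separating $b_v$ from the start, which can only be $(u,\{v,b_v\})$. This ``tight coupling'' is exactly what makes $g(\chi(\Phi(P))) = \chi(P)$ hold coordinatewise, in particular at $(u,S)$.

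The dynamic programming formulation conditions follow immediately. The image condition $g(\cP(H')) = \cP(H)$ is a direct consequence of the bijection $\Phi$ together with the coordinatewise identity above. For any $P\in\cP(H)$ the fibre $g^{-1}(\chi(P))\cap\cP(H')$ contains $\Phi(P)$, and on that fibre $c'(\chi(P'))$ equals the constant $c(\chi(P))$, so $\max_{P'\in g^{-1}(\chi(P))\cap\cP(H')} c'(\chi(P')) = c(\chi(P))$, which is precisely the value of $P$ in the DP on $H$. Hence $(H',g,c')$ is a dynamic programming formulation of (\ref{prob:dp-paths}) on $H$ with objective $c$.

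Finally, I verify that the no common descendants condition is inherited by $H'$. For each preserved arc $e\in E\backslash\{(u,S)\}$ the heads are unchanged, and the descendant sets in $H'$ agree with those in $H$ except for the possible addition of $b_v$, which is reachable only from $u$; so any common descendant in $H'$ of two distinct heads of $e$ that is not $b_v$ would give a common descendant in $H$, while $b_v$ itself cannot be a common descendant of two distinct heads of $e$. For the new arc $(u,\{v,b_v\})$, the descendants of $b_v$ in $H'$ form the set $\{b_v\}\cup\bigcup_{v'\in S\backslash\{v\}}\mathrm{desc}_H(v')$, and any overlap with $\mathrm{desc}_H(v)=\mathrm{desc}_{H'}(v)$ would witness a common descendant in $H$ of two distinct heads of the original arc $(u,S)$, contradicting the hypothesis. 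The arc $(b_v,S\backslash\{v\})$ inherits the property directly from the corresponding subset of heads of $(u,S)$. The main subtlety throughout is being meticulous with the path axioms in order to justify the asymmetric choice $g(x)_{(u,S)} = x_{(u,\{v,b_v\})}$; this rests entirely on the tight coupling of the two subdivision arcs inside any valid hyperpath of $H'$.
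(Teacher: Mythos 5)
Your proof is correct and follows essentially the same route as the paper's: the same coordinatewise map $g$ identifying $(u,\{v,b_v\})$ with $(u,S)$, the same bijection between $\cP(H)$ and $\cP(H')$ hinging on the fact that the two subdivision arcs appear together in any hyperpath of $H'$, and the same observation that subdivision preserves the no-common-descendants property. The only cosmetic difference is that you define $c'$ as $c\circ g$ while the paper specifies $c'$ on the standard basis (assigning $0$ to the arc $(b_v,S\backslash\{v\})$), which agrees on incidence vectors of paths; your reading of the statement's $(H,g,c')$ as a typo for $(H',g,c')$ is also consistent with the paper's own proof.
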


\begin{proof}\label{proof:lemma:subdivision}
    Take $g$ to be the affine function defined as follows. For any $e \in E$ and $x \in \R^{E'}$,
    $$g(x)_e := \begin{cases}
        x_{(u,\{v, b_v\})}, &\text{if $e=(u,S)$}\\
        x_e, &\text{otherwise.}
    \end{cases}$$
    It is not hard to see that $g$ is affine, and in fact linear. In fact, $g$ is simply the function which identifies $e$ with $(u,\{v,b_v\})$ and every other edge in $e$ with the corresponding edge in $E'$.

    To see that $g$ is a bijection between $\cP(H)$ and $\cP(H')$ it suffices to observe that a path $P$ in $\cP(H')$ uses $(u,\{v,b_v\})$ if and only $P$ uses $(b_v, S\backslash \{v\})$. This follows immediately from the definition of hyperpaths. 

    Now for any $c:\R^E\rightarrow \R$ let $c':\R^{E'}\rightarrow \R$ be the unique affine function which acts on the standard basis in the following way: for any $e \in E'$,
    $$c'(e) = \begin{cases}
        c((u,S)), &\text{ if $e = (u,\{v,b_v\})$} \\
        0, &\text{ if $e = (b_v, S\backslash\{v\})$} \\
        c(e), &\text{otherwise.}
    \end{cases}$$
    Again since a path $P$ in $\cP(H')$ uses $(u,\{v,b_v\})$ if and only $P$ uses $(b_v, S\backslash \{v\})$, it is easy to see that $c(g(P) = c'(P)$ for any $P \in \cP(H')$. Hence $(H',g,c')$ is the desired dynamic programming formulation.

    Since the subdivision operation preserves the ``no common descendants'' property, $(H',g,c)$ is integral if $H$ satisfies the no common descendants condition.
$\blacksquare$\end{proof}

 For a directed hypergraph $H=(V,E)$ let $\Delta(H) := \max\{|S|: (u,S) \in E\}$ and let $\Gamma(H) := \abs{\{(u,S) \in E: |S| = \Delta(H)\}}$. The following Lemma shows that we may assume the number of heads of any arc in a dynamic programming formulation is constant.

\begin{lemma}\label{lemma:constant-arcs}
     Consider a combinatorial optimization problem of the form (\ref{prob:comb-opt}). If there exists a dynamic programming formulation $(H,g,c)$ for (\ref{prob:comb-opt}) then there exists a dynamic programming formulation $(H^*,g^*,c^*)$ for (\ref{prob:comb-opt}) such that $\Delta(H^*) \leq 2$, and $|E(H^*)| = \sum_{u \in V(H)}\sum_{(u,S) \in E(H)}(|S|-1).$

     Moreover, if $H$ is integral then $H^*$ is integral.
\end{lemma}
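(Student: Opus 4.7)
The plan is to iteratively apply the subdivision operation from Lemma~\ref{lemma:subdivision} to every arc $(u,S) \in E(H)$ with $|S| > 2$, reducing each such arc to a chain of two-head arcs, and then to glue the resulting maps together using Lemma~\ref{lemma:composition}.

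First I would analyze what happens to a single arc $(u,S)$ of size $k \ge 3$. Picking some $v \in S$ and subdividing with respect to $v$ replaces $(u,S)$ by the pair $(u,\{v,b_v\})$ and $(b_v, S\setminus\{v\})$, of sizes $2$ and $k-1$ respectively. Iterating the subdivision on the remaining shrinking arc produces, after $k-2$ further subdivisions, exactly $k-1$ arcs of size two in place of $(u,S)$ and leaves all other arcs unchanged. Performing this procedure on every arc of $H$ of size greater than two yields a hypergraph $H^*$ with $\Delta(H^*) \le 2$ and with
\[
|E(H^*)| \;=\; \sum_{(u,S)\in E(H)}(|S|-1),
\]
matching the claimed bound. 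Acyclicity is preserved at each step by inserting $b_v$ into the topological order immediately after $u$.

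Second, I would assemble the affine maps and cost functions by induction on the number of subdivisions. Each single subdivision produces, by Lemma~\ref{lemma:subdivision}, an affine map $g_i$ from the new edge space into the previous edge space together with an affine cost $c_i$ such that the resulting triple is a DP formulation of the path problem~(\ref{prob:dp-paths}) on the previous hypergraph. Invoking Lemma~\ref{lemma:composition} at each step with the current triple playing the role of the outer DP and the new triple playing the role of the inner DP, I would obtain an affine map $g^* = g \circ g_1 \circ \cdots \circ g_N$ and an affine cost $c^*$ such that $(H^*,g^*,c^*)$ is a DP formulation of the original problem~(\ref{prob:comb-opt}).

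Third, I would verify that the no-common-descendants property is preserved by a single subdivision; integrality of $H^*$ then follows by induction. For the new edge $(u,\{v,b_v\})$, any common descendant of $v$ and $b_v$ in the new hypergraph is, via the unique outgoing arc $(b_v, S\setminus\{v\})$ from $b_v$, also a common descendant of $v$ and some $w \in S\setminus\{v\}$ in the original hypergraph, contradicting the no-common-descendants property at $(u,S)$. For the new edge $(b_v, S\setminus\{v\})$, any pair of vertices in $S\setminus\{v\}$ retains exactly the same descendants in $H'$ as in $H$, so the property is inherited directly from $(u,S)$.

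The main obstacle I expect is bookkeeping: since Lemma~\ref{lemma:subdivision} as stated produces a DP formulation of the path problem on the previous hypergraph rather than of the original optimization problem, one has to keep track of which triple is being transformed at each stage and apply Lemma~\ref{lemma:composition} carefully so that, at termination, the composed affine map really does send paths in $\cP(H^*)$ back to feasible solutions in $\cX$. A clean way to handle this is to process the arcs in a fixed order and, after each subdivision, treat the current $(H_i, g\circ g_1 \circ \cdots \circ g_i, c_i)$ as a DP formulation of~(\ref{prob:comb-opt}) before subdividing the next arc.
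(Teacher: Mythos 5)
Your proposal is correct and follows essentially the same route as the paper: repeatedly subdivide each arc with more than two heads via Lemma~\ref{lemma:subdivision}, compose the resulting affine maps and costs through Lemma~\ref{lemma:composition}, verify the arc count $\sum_{(u,S)\in E(H)}(|S|-1)$, and note that subdivision preserves the no-common-descendants property. The paper merely organizes the iteration as a double induction on $\Delta(H)$ and $\Gamma(H)$ instead of a direct pass over the arcs, which is only a bookkeeping difference.
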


\begin{proof}\label{proof:lemma:constant-arcs}
    We proceed by double induction on $\Delta(H)$ and $\Gamma(H)$.
    In the base case, for any $\Gamma(H)$ if $\Delta(H) \leq 2$ then $(H,g,c)$ is the desired dynamic program. In the inductive case consider $H$ such that $\Delta(H) > 2$ and suppose the lemma holds on all directed acyclic hypergraphs $H'$ with $\Delta(H') < \Delta(H)$ or $\Delta(H') = \Delta(H)$ and $\Gamma(H') < \Gamma(H)$. Since $\Delta(H) \geq 3$ there exists an edge $(u,S) \in E(H)$ such that $|S| \geq 3$. Let $(H',g',c')$ be the dynamic program for (\ref{prob:dp-paths}) on $H$ with objective $c$ given by Lemma~\ref{lemma:subdivision} where $H'$ is obtained by a subdivision of $(u,S)$ with respect to vertex $v \in S$. By Lemma \ref{lemma:composition} $(H', g\circ g', c')$ is a dynamic programming formulation for (\ref{prob:comb-opt}). 
    
    Now notice that either $\Delta(H') = \Delta(H) -1$ or $\Delta(H') = \Delta(H)$ and $\Gamma(H') = \Gamma(H)-1$. Hence by induction there is a dynamic program $(H^*, g^*, c^*)$ for the problem (\ref{prob:dp-paths}) on hypergraph $H'$ with respect to objective $c'$ such that $\Delta(H^*)\leq 2$ and 
\begin{align*}|E(H^*)| &= \sum_{u' \in V(H')}\sum_{(u',S') \in E(H')} (|S'|-1)\\
&= |S\backslash\{v\}| -1  + |\{v, b_v\}|-1 + \sum_{u' \in V(H)}\sum_{(u',S') \in E(H)\backslash\{(u,S)\}} |S'|-1 \\
&= \sum_{u' \in V(H)}\sum_{(u',S') \in E(H)}|S'|-1,
\end{align*}
where the second equality follows since $H'$ is obtained from $H$ via a subdivision of $(u,S)$ with respect to $v$.
By Lemma \ref{lemma:composition} $(H^*, g\circ g' \circ g^*, c^*)$ is a dynamic programming formulation for (\ref{prob:comb-opt}) as desired.

Since subdivision preserves the ``no common descendants'' property, if $H$ is integral then $H^*$ is integral.
$\blacksquare$
\end{proof}

The next lemma is our main techincal lemma. It provides the backbone of our dynamic programming formulation for (\ref{prob:lsa-3}) by showing that we track the congruency of all hyperpaths rooted at a particular vertex by expanding the size of our hypergraph by a factor of $p^{\Delta(H)}+1$.

\begin{lemma}\label{lemma:congruency}
Let $H=(V,E)$ be a directed acyclic hypergraph. Let $p$ be a prime. Let $k \in \Z_p$ and let $a \in \Z^E_p$. There exists a directed acyclic hypergraph $H' = (V',E')$ and an affine function $g': \cP(H')\rightarrow \cP(H)$, $g'(x) = Ax + b$, such that:
\begin{enumerate}[1)]
    \item $\abs{E'} \leq p^{\Delta(H)+1}\abs{E}$
    \item For every $v \in V\backslash L(H)$, for every $k \in \Z_p$, if $\{P \in \cP(H_v): a(P) = k \mod p\} \neq \emptyset$ then there exists $v' \in V(H')$ such that $$g'(\cP(H'_{v'}))=\{P \in \cP(H_v) : a(P) = k \mod p\}.$$
\end{enumerate}
Moreover if $H$ satisfies the ``no common descendants" property then $H'$ satisfies the ``no common descendants" property.
\end{lemma}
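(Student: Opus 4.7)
The plan is to construct $H'$ by taking $p$ residue-labeled copies of each vertex of $H$: every sink $v \in L(H)$ is replaced by a single copy $(v,0)$, and every non-sink $v$ by copies $(v,k)$ indexed by $k \in \Z_p$. I will build $V'$ and $E'$ inductively in reverse topological order: a copy $(v,k)$ is included in $V'$ whenever there is an arc $e = (v,S) \in E$ and a tuple of residues $(k_u)_{u \in S}$ with $(u,k_u) \in V'$ for all $u \in S$ such that $k \equiv a(e) + \sum_{u \in S} k_u \pmod{p}$; each such configuration simultaneously contributes the arc $((v,k),\{(u,k_u) : u \in S\})$ to $E'$. The affine map $g'$ is the linear projection $g'(x)_e := \sum_{e' \in \pi^{-1}(e)} x_{e'}$, where $\pi : E' \to E$ forgets residues.

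The bookkeeping is routine. Each arc $e = (v,S) \in E$ contributes at most $p^{|S|}$ arcs to $E'$ (one per residue tuple), so $|E'| \le p^{\Delta(H)}|E| \le p^{\Delta(H)+1}|E|$; a topological order on $V$ extends lexicographically to $V'$, so $H'$ is acyclic; and by the inductive construction, the sinks of $H'$ are precisely the copies $(v,0)$ with $v \in L(H)$, guaranteeing that every path in $\cP(H')$ terminates at a copy of a sink of $H$. The ``no common descendants'' property is preserved because any common descendant $(x,\ell)$ of two heads $(u_1,k_{u_1}),(u_2,k_{u_2})$ of an arc of $E'$ must have $u_1 \ne u_2$ (distinct heads of a single arc of $E'$ always carry distinct underlying vertices), and then $x$ projects to a common descendant of the distinct heads $u_1,u_2$ of an arc of $H$.

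The substance is the path correspondence. I will show that for any non-sink $v$ and residue $k$ admitting some path of value $k$ rooted at $v$ in $H$, the vertex $v' := (v,k)$ satisfies $g'(\cP(H'_{v'})) = \{P \in \cP(H_v) : a(P) \equiv k \pmod{p}\}$. The \emph{lifting} direction is direct: given such $P$, label each $w \in V(P)$ by $k_w := a(P_w) \bmod p$, where $P_w$ is the sub-hyperarborescence of $P$ rooted at $w$. The recursion $a(P_w) = a(e_w) + \sum_{u \in S_w} a(P_u)$ at the outgoing arc $e_w = (w,S_w)$ of $w$ in $P$ (taking $a(P_w)=0$ if $w$ is a leaf) matches the residue constraint defining $E'$, so the lifted arcs lie in $E'$ and the labeled hypergraph $P'$ is a path in $\cP(H'_{(v,k)})$ with $g'(P') = P$.

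The main obstacle is the \emph{projection} direction: given $P' \in \cP(H'_{(v,k)})$, I need $g'(P')$ to be a valid hyperpath in $\cP(H_v)$ with $a$-value $k$. The $a$-value part drops out of telescoping the residue constraints along arcs of $P'$ from the root, but the path part reduces to the critical sub-claim that each underlying vertex of $V$ admits at most one copy in $V(P')$; otherwise the projection could record some vertex as the head of two arcs of $H$ and fail to be a hyperarborescence. I plan to prove this sub-claim by contradiction: two distinct copies $(w,j_1),(w,j_2) \in V(P')$ with $j_1 \ne j_2$ can be traced through their parent arcs in the hyperarborescence $P'$ back to a last common ancestor, producing an arc $e' \in E(P')$ whose heads contain two distinct underlying vertices $u_1 \ne u_2$, each having $w$ as a descendant in $H$. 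Then $\pi(e') \in E$ has distinct heads $u_1,u_2$ sharing the descendant $w$, violating the ``no common descendants'' property that $H'$ inherits from $H$. This is the structural property the argument rests on, and it is precisely consistent with the paper's later use of the lemma on integral dynamic programs.
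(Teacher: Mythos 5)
Your construction is, in substance, the paper's construction with the induction unrolled: the paper peels off one source $u$ at a time, recursively builds the gadget for $H-u$, and then adds residue-labeled copies $u^k$ of $u$ with arcs to the residue-typed vertices of the recursive gadget; if you expand that recursion you obtain exactly your layered hypergraph on copies $(v,k)$ with arcs enforcing $k \equiv a(e) + \sum_{u\in S} k_u \pmod p$. What your organization buys is a cleaner verification — a single global lifting/projection correspondence in place of the paper's maintained-bijection invariant — a slightly sharper count ($p^{\Delta(H)}\abs{E}$ versus $p^{\Delta(H)+1}\abs{E}$), and, importantly, an \emph{explicit} identification of where the ``no common descendants'' property enters: your sub-claim that a hyperpath of $H'$ contains at most one copy of each underlying vertex, and its proof via a last-common-ancestor arc, are correct and are the right way to justify that the projection of a hyperpath is a hyperpath.

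The one point to be careful about is that this reliance is not licensed by the lemma as stated: ``no common descendants'' appears only in the ``moreover'' clause, not as a hypothesis, and without it your projection direction genuinely fails (two copies $(w,j_1),(w,j_2)$ with $j_1\neq j_2$ can coexist in a hyperpath of $H'$, whose image under $g'$ then has $w$ as the tail of two arcs). This is not a defect peculiar to your argument — the paper's own surjectivity step asserts that a hyperpath rooted at $u$ is a \emph{disjoint} union of $(u,S)$ and subpaths $P_1,\dots,P_d$, which is equally false without the condition — and the lemma is only ever invoked on integral formulations, so the hypothesis is available in every application. But since you are proving the lemma as stated, you should either add ``no common descendants'' as a standing hypothesis or note explicitly that property $2)$ is claimed only under it; as written, your proof establishes a (correct and sufficient) variant of the lemma rather than the literal statement.
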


\begin{proof}\label{proof:lemma:congruency}
    We may assume that each source $u$ of $H$ has at most one arc outgoing from $u$. To see this, if the arcs leaving $u$ are $(u,S_1),\dots, (u, S_\ell)$ then we can replace $u$ with $\ell$ vertices $u_1, \dots, u_\ell$ and replace the arcs of $u$ with $(u_1,S_l)$,\dots, $(u_\ell, S_\ell)$. This does not increase the number of arcs, and preserves the structure of the paths in $\cP(H)$.
    
    Now suppose for a contradiction that the lemma is false. Consider a counterexample $H=(V,E)$ which minimizes $|E|$. Then $E\neq \emptyset$, otherwise $H$ is trivially not a counterexample. Let $u\in V$ be a source of $H$ which has an outgoing arc, let $(u,S) \in E$ be the arc outgoing from $u$ and let $d = |S|$. Arbitrarily fix an indexing on the vertices of $S$ as in $S=\{s_1,\dots, s_d\}$.
    
    Since $H-u$ has fewer arcs than $H$, it is not a counterexample to the lemma. Hence there exists a directed acyclic hypergraph $\overline{H}$ and an affine function $\overline{g}$ satisfying properties $1)$ and $2)$ of the lemma with respect to $H-u$.

    We now construct a new directed acylic hypergraph $H'$ and affine function $g': \R^{E'}\rightarrow \R^{E}$ via the following procedure.
    \begin{enumerate}
        \item Initialize $H'=(V',E')$ to be the hypergraph $\bar{H}$.
        \item For every $k \in \Z_p$ such that $\{P \in \cP(H): a(P) = k \mod p\} \neq \emptyset$ do:
        \begin{enumerate}
            \item Add a new vertex $u^k$ to $V'$.
            \item For each $q \in \Z_p^d$ such that $a^Tq = k- g((u,S))\mod p$ do:
            \begin{itemize}
                \item if for every $i \in [d]$ there exists $s'_i \in V'$ such that $g$ is a bijection between $\cP(H'_{s'_i})$ and $\{P \in \cP(H_{s_i}): a(P) = k-g((u,S)) \mod p \}$ then add arc $(u^k, \bigcup_{i \in [d]}s'_i)$ to $E'$.
            \end{itemize}
        \end{enumerate}
        \item We will define $g'$ by its action on the standard basis. For each $e \in E'$ we define $$g'(e) := \begin{cases}
            \chi((u,S)), &\text{ if $e$ is outgoing from $u^k$ for some $k \in \Z_p$}\\
            \overline{g}(e), &\text{ otherwise.} 
        \end{cases}$$
    \end{enumerate}
    We will first verify that property $1)$ is satisfied by $H'$. Since $\overline{H}$ satisfies property $1)$, $\abs{E(\overline{H})} \leq p^{\Delta(H-v)+1} (\abs{E}-1)$. By our construction of $H'$, $E'$ has at most $p\cdot \abs{Z^d_p}$ more arcs than $E(\overline{H})$, and hence
    $$\abs{E'} \leq \abs{\overline{E}} + p\cdot \abs{Z^d_p} \leq p^{\Delta(H-v)+ 1} (\abs{E}-1) + p^{d+1} \leq p^{\Delta(H) + 1}\abs{E}.$$
    Now we will show that property $2)$ is satisfied by $H'$ and $g'$. Let $v \in V\backslash L(H)$. Let $k \in \Z_p$ and suppose that $\{P \in \cP(H_v) : a(P) = k \mod p\} \neq \emptyset$. If $v \neq u$, then since $\overline{H}$ and $\overline{g}$ satisfies property $2)$, $H'$ and $g'$ satisfy property $2)$ by steps $1.$ and $3.$ of the construction. Thus we may assume that $v = u$. Then the vertex $u^k$ was added to $V'$ in step $2a)$ of the construction. It suffices to show that $g'$ is a bijection between $\cP(H'_{u^k})$ and $\{P\in\cP(H_u) : a(P) = k \mod p\}$.

    First we show surjectivity. Let $P \in \{P \in \cP(H_u): a(P)=k \mod p\}$. Then $P$ is the disjoint union of $(u,S)$ and $d$ paths $P_1, \dots, P_d$ where each $P_i$ is in $\cP(H_{s_i})$ respectively. For each $i \in [d]$, let $k_i = a(P_i)\mod p$. By property $2)$ applied to $\overline{H}$ and $\overline{g}$, and by our construction of $H'$ starting from $\overline{H}$ and $g'$ starting from $\overline{g}$, for each $i\in [d]$ there exists a vertex $s'_i \in V'\cap \overline{V}$ such that $g'$ is a bijection between $\cP(H'_{s'_i})$ and $\{P \in \cP(H_{s_i}) : a(P) = k_i \mod p\}$. Since $g'$ is such a bijection there exists $P'_i \in \cP(H'_{s'_i})$ such that $g'(P'_i) = P_i$.
    
    By step $2b)$ of our construction, there is an arc of the form $(u^k, \bigcup_{i \in [d]} s'_i)$ in $E'$. Hence there is a path $P' \in \cP(H'_{u^k})$ which is a disjoint union of $(u^k, \bigcup_{i \in [d]} s'_i)$ with $P'_1,\dots, P'_d$. But then
    $$g'(P') = g'(u^l,\bigcup_{i \in [d]} s'_i) \bigcup_{i\in[d]}g'(P'_d) = (u,S) \bigcup_{i \in [d]}P_i = P.$$

    Now we show injectivity. Let $P', Q' \in \cP(H'_{u^k})$ be paths in $H'$ such that $g'(P') = g'(Q')$. Let $P = g'(P') = g'(Q')$. We consider the disjoint union of hyperpaths obtained by deleting $u^k$ (and hence the single arc outgoing from $u^k$) from $P'$ and $Q'$. Observe that 
    $$P - u = g'(P'-u) = g'(Q'-u).$$
    By the bijectivity of $g'$ inherited from $\overline{g}$, this implies that $P'-u = Q'-u$. But that necessarily implies that $P'$ and $Q'$ have the same arc outgoing from $u^k$. Thus $P' = Q'$. Therefore we have shown property $2)$.

    It is easy to see that the construction preserves the no common descendants property.
$\blacksquare$
\end{proof}

We are now ready to show our main theorem, which says that for any combinatorial optimization problem which has a dynamic program, its congruency-constrained version also has a dynamic program of proportional size.

\begin{theorem}\label{th:congruency}
Consider an instance of a combinatorial optimization problem (\ref{prob:comb-opt}). Let $p$ be a prime, let $v\in \Z_p^n$, and let $k \in \Z_p$. Consider the corresponding congruency-constrained optimization problem (\ref{prob:lsa-3}). If (\ref{prob:comb-opt}) has a dynamic programming formulation $(H,g,c)$ then (\ref{prob:lsa-3}) has a dynamic programming formulation $(H',g',c')$ such that $\abs{E(H')} \leq p^3\cdot\abs{V(H)}\cdot\abs{E(H)}$.

Moreover if $(H,g,c)$ is integral then $(H',g',c')$ is integral.
\end{theorem}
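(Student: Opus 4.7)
My plan is to first normalize the given DP formulation so Lemma~\ref{lemma:congruency} can be applied with a bounded blow-up, then translate the mod-$p$ constraint from $\cX$ to the hyperpath space, and finally carve out the correct sub-hypergraph from the output of Lemma~\ref{lemma:congruency}. First I would apply Lemma~\ref{lemma:constant-arcs} to $(H,g,c)$ to obtain a DP formulation $(H^*, g^*, c^*)$ for (\ref{prob:comb-opt}) with $\Delta(H^*) \leq 2$ and $|E(H^*)| = \sum_{(u,S) \in E(H)}(|S|-1) \leq |V(H)| \cdot |E(H)|$. This normalization is precisely what will turn the $p^{\Delta+1}$ blow-up factor in Lemma~\ref{lemma:congruency} into the desired $p^3$.

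Next, since $g^*$ is affine, I can write $g^*(P) = A^* \chi(P) + b^*$, and define $a := A^{*T} v \pmod{p} \in \Z_p^{E(H^*)}$ together with $k' := k - v^T b^* \pmod{p}$. Then for every $P \in \cP(H^*)$ we have $v^T g^*(P) \equiv k \pmod{p}$ if and only if $a(P) \equiv k' \pmod{p}$, so the feasible region of (\ref{prob:lsa-3}) is exactly $g^*\bigl(\{P \in \cP(H^*) : a(P) \equiv k' \pmod{p}\}\bigr)$. Applying Lemma~\ref{lemma:congruency} to $H^*$, $a$, and $p$ then yields a directed acyclic hypergraph $H^\sharp$ with $|E(H^\sharp)| \leq p^{\Delta(H^*)+1}|E(H^*)| \leq p^3 |V(H)||E(H)|$ and an affine map $g^\sharp:\cP(H^\sharp) \to \cP(H^*)$.

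The main technical step, which I expect to be the principal obstacle, is extracting from $H^\sharp$ a sub-hypergraph $H'$ whose path set corresponds exactly to the target set. For each source $u$ of $H^*$ with $\{P \in \cP(H^*_u) : a(P) \equiv k' \pmod{p}\} \neq \emptyset$, let $u^{k'} \in V(H^\sharp)$ be the vertex guaranteed by property~(2) of Lemma~\ref{lemma:congruency}; collect these into $U'$ and let $H'$ be the sub-hypergraph of $H^\sharp$ induced on $V' := \bigcup_{u' \in U'} V(H^\sharp_{u'})$. I would then argue the sources of $H'$ are exactly $U'$: each $u^{k'}$ is introduced fresh by Lemma~\ref{lemma:congruency}'s construction with no incoming arcs, and for any $w \in V' \setminus U'$ the arcs on any path $u' \rightsquigarrow w$ (for the $u' \in U'$ of which $w$ is a descendant) lie entirely within $V'$, so $w$ inherits an incoming arc in $H'$. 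The delicate part is verifying that distinct $u^{k'}_1, u^{k'}_2 \in U'$ are not descendants of each other in $H^\sharp$; this follows because, via $g^\sharp$, a descendant relation among the $u^{k'}$'s would project to a descendant relation among the underlying sources $u_1, u_2$ of $H^*$, contradicting that both are sources. It then follows that $\cP(H') = \bigsqcup_{u' \in U'} \cP(H^\sharp_{u'})$ maps bijectively under $g^\sharp$ onto the congruency-constrained path set.

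Finally, setting $g' := g^* \circ g^\sharp$ and letting $c'$ be the unique affine function with $c'(P') = c^*(g^\sharp(P'))$ (which exists because the composition of affine maps is affine), the triple $(H',g',c')$ is a DP formulation for (\ref{prob:lsa-3}): the objective preservation $f(x) = \max_{P' \in g'^{-1}(x)} c'(P')$ follows from the DP property of $(H^*,g^*,c^*)$ together with injectivity of $g^\sharp$ on $\cP(H')$. The size bound $|E(H')| \leq |E(H^\sharp)| \leq p^3\,|V(H)|\,|E(H)|$ is immediate from induced-subgraph containment, and the ``no common descendants'' property is preserved by subdivision (Lemma~\ref{lemma:constant-arcs}), by Lemma~\ref{lemma:congruency}'s construction, and by taking induced sub-hypergraphs, so integrality of $(H,g,c)$ carries through to $(H',g',c')$.
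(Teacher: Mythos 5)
Your proposal follows essentially the same route as the paper's proof: normalize via Lemma~\ref{lemma:constant-arcs} to get $\Delta(H^*)\leq 2$, transfer the congruence $v^Tx\equiv k$ to the hyperpath space via the affine data $A^{*T}v$ and $k-v^Tb^*$, apply Lemma~\ref{lemma:congruency}, restrict to the union of sub-hypergraphs rooted at the vertices supplied by property~2) for the sources of $H^*$, and compose the affine maps for $g'$ and $c'$. The only difference is that you verify more carefully than the paper that the chosen vertices are genuinely the sources of the extracted sub-hypergraph and that their path sets are disjoint, which is a harmless (indeed welcome) elaboration rather than a departure.
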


\begin{proof}\label{proof:th:congruency}
    We first apply Lemma~\ref{lemma:constant-arcs} to problem (\ref{prob:comb-opt}) and dynamic programming formulation $(H,g,c)$ to obtain a dynamic programming formulation $(H^*, g^*, c^*)$ for (\ref{prob:comb-opt}) such that $$\Delta(H^*) \leq 2\quad \text{and}\quad\abs{E(H^*)} \leq \abs{V(H)}\cdot\abs{E(H)}.$$
    
    Since $g^*$ is an affine function, there exists a matrix $A$ and a vector $b$ such that $g^*(x) = Ax + b$ for any $x$. Now apply Lemma~\ref{lemma:congruency} to hypergraph $H^*$ with prime $p$, integer $k-v^Tb \in \Z_p$, and  vector $A^Tv \in \Z^E_p$. We obtain a directed acyclic hypergraph $H'$ and affine function $g' : \cP(H')\rightarrow \cP(H^*)$ such that 
    \begin{enumerate}[1)]
        \item $\abs{E'}\leq p^{\Delta(H^*) + 1}|E(H^*)|\leq p^3\abs{V(H)}\abs{E(H)}$ and
        \item For every $v \in V(H^*)\backslash L(H^*)$, if $$\{P \in \cP(H^*_v) : v^TA\chi(P) = k-v^Tb \mod p\} \neq \emptyset$$ then there exists $v' \in V(H')$ such that $g'$ is a bijection between $\cP(H'_{v'})$ and $\{P \in \cP(H^*_v) : v^TA\chi(P)=k-v^Tb \mod p\}$.
    \end{enumerate}
    Let $c': \R^{E(H')} \rightarrow \R$ be the affine function $c^*\circ g'$. Let $T$ be the set of sources in $H'$ that satisfy the hypothesis in property $2)$. We claim that 
    $$D:=(\bigcup_{u \in T}H'_u, g^*\circ g', c')$$ is a dynamic programming formulation for (\ref{prob:comb-opt}). Indeed by property $2)$, $$g'(\cP(\bigcup_{u \in T}H'_u))=\{P \in : \cP(H^*) : v^TA(P) = k-v^Tb \mod p\}.$$ Now observe that 
    \begin{align*}
        &\{P \in : \cP(H^*) : v^TA(P) = k-v^Tb \mod p\} \\
        &= \{P \in : \cP(H^*) : v^T(A(P)+b) = k \mod p\}\\
         &= \{P \in : \cP(H^*) : v^Tg^*(P)= k\mod p\}.
    \end{align*}
    Since $g^*(\cP(H^*))=\cX$, we have that $$g^*(\{P \in : \cP(H^*) : v^Tg^*(P)= k\mod p\})=\{x \in \cX: v^Tx = k \mod p\}.$$ Therefore $$g^*\circ g'(\cP(\bigcup_{u \in T}H'_u))=\{x \in \cX: v^Tx = k \mod p\},$$ the feasible region of (\ref{prob:lsa-3}).

    Lastly for any $P \in \cP(\bigcup_{u \in T}H'_u)$, we have
    $$f(g^*\circ g' (P)) = c(g'(P)) = c'(P).$$
    Thus $D$ is a dynamic programming formulation for (\ref{prob:lsa-3}). Observe that the lemmas applied in the construction of $D$ all preserve integrality. Hence if $(H,g,c)$ is integral then $D$ is integral.
$\blacksquare$\end{proof}

Due to Theorem~\ref{th:congruency} and Lemma~\ref{lemma:polytime-dp} we have the following Corollary. 

\begin{corollary}
    If (\ref{prob:comb-opt}) has an integral dynamic programming formulation $(H,g,c)$ then for any $v,k,p$ problem (\ref{prob:lsa-3}) can be solved in time polynomial in size of $H$, the prime $p$, and the encoding of $g,c,v,k$,
\end{corollary}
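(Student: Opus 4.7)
The plan is to obtain the corollary as an immediate chaining of Theorem~\ref{th:congruency} with Lemma~\ref{lemma:polytime-dp}, using the fact that both the construction and the polynomial-time DP solver preserve integrality and produce only a polynomial blow-up in size.

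Concretely, I would first invoke Theorem~\ref{th:congruency} on the instance of (\ref{prob:comb-opt}) together with the given data $p, v, k$. Since the input dynamic programming formulation $(H,g,c)$ is integral, the theorem produces a dynamic programming formulation $(H',g',c')$ for (\ref{prob:lsa-3}) which is itself integral and satisfies the size bound $|E(H')| \leq p^3 \cdot |V(H)|\cdot |E(H)|$. The number of vertices of $H'$ is at most the number of arcs (up to an additive term for sources), so $|V(H')|$ is also polynomial in $|V(H)|$, $|E(H)|$, and $p$. The affine maps $g'$ and $c'$ are constructed explicitly from $g,c,v,k$ in Theorem~\ref{th:congruency}'s proof, so their encodings are bounded polynomially in the encodings of $g,c,v,k$ together with $p$ and $|H|$.

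Next, I would apply Lemma~\ref{lemma:polytime-dp} to the integral dynamic programming formulation $(H',g',c')$ of (\ref{prob:lsa-3}). That lemma says that (\ref{prob:lsa-3}) can be solved in time polynomial in the encoding of $(H',g',c')$. Combining this with the size bound from the previous step shows that (\ref{prob:lsa-3}) can be solved in time polynomial in the size of $H$, the prime $p$, and the encoding of $g,c,v,k$, as claimed.

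There is essentially no obstacle here, since all of the technical content has been absorbed into Theorem~\ref{th:congruency} and Lemma~\ref{lemma:polytime-dp}; the only thing to verify carefully is that the construction of Theorem~\ref{th:congruency} is itself carried out in polynomial time (rather than merely producing a polynomial-sized output), which is immediate from inspecting the constructions in Lemmas~\ref{lemma:constant-arcs} and~\ref{lemma:congruency}, both of which perform a bounded number of operations per arc of the intermediate hypergraph. Thus the corollary follows.
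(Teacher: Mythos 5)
Your proposal is correct and follows exactly the paper's own route: the paper derives this corollary by directly combining Theorem~\ref{th:congruency} with Lemma~\ref{lemma:polytime-dp}, just as you do. Your additional observation that the construction in Theorem~\ref{th:congruency} runs in polynomial time (not merely that its output has polynomial size) is a worthwhile point of care that the paper leaves implicit.
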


Via this Corollary, Lemma~\ref{lemma:minimum-excess-to-nucleolus}, and Lemma~\ref{lemma:subspace-to-congruency} we obtain our first main result: Theorem~\ref{th:dp-nucleolus}.

\section{Applications}\label{sec:applications}
\paragraph{}
In this section we show a couple of applications of Theorem~\ref{th:dp-nucleolus} to computing the nucleolus of cooperative games. The first application is to Weighted Voting Games. In~\cite{Pashkovich2018} a pseudopolynomial time algorithm for computing the nucleolus of Weighted Voting Games was given. We show how the same result can be obtained as a special case of Theorem~\ref{th:dp-nucleolus}. Recall that a weighted voting game $(n,\nu)$ has value function $\nu: 2^{[n]} \rightarrow \{0,1\}$ determined by a vector $w \in \Z^n$ and $T \in Z$, such that for any $S \subseteq [n]$, $\nu(S) =1$ if and only if $w(S) \geq T$.

We partition $2^{[n]}$ into two classes: $N_0 := \{S \subseteq [n]: w(S) < T\}$ and $N_1:=\{S \subseteq [n]: w(S) \geq T\}$. If we can design a dynamic programming formulation for the minimum excess coalition problem restricted to $N_0$:
$
    \max\{-x(S): w(S) \leq T-1, S\subseteq [n]\}
$
and a dynamic programming formulation for the minimum excess coalition problem restricted to $N_1$:
$
    \max\{-x(S) + 1:w(S)\ \geq T, S \subseteq [n]\},
$
 then the dynamic programming formulation which takes the maximum of these two formulations will provide a dynamic programming formulation for the minimum excess coalition problem of the weighted voting game.

If we let $W[k, D]$ denote $\max\{-x(S): w(S) \leq D, S\subseteq [k]\}$ then we can solve the minimum excess coalition problem restricted to $N_0$ by computing $W[n,T-1]$ via the following recursive expression, which is essentially a dynamic program for Knapsack Cover,
$$W[k,D] = \begin{cases}
    \max\{W[k-1,D-w_k], W[k-1,D]\}, &\text{ if $k > 1$} \\
    -x_1, &\text{ if $k=1$ and $w_1 \leq D$}\\
    -\infty, &\text{ if $k=1$ and $w_1 > D$}.
\end{cases}$$
It is not hard to construct a dynamic programming formulation $(H_0,g_0,c_0)$ for the minimum excess coalition problem restricted to $N_0$ by following this recursive expression. The hypergraph $H_0$ will in fact be a rooted tree (i.e.\ all heads will have size one), and $H_0$ will have $O(nT)$ vertices and arcs. Via a similar technique, a dynamic programming formulation $(H_1,g_1,c_1)$ with $O(nT)$ arcs can be constructed for the minimum excess problem restricted to $N_1$. Then by taking the union these dynamic programming formulations, we obtain an integral dynamic programming formulation of size $O(nT)$. Therefore by Theorem~\ref{th:dp-nucleolus} we obtain a short proof that 
\begin{theorem}
    (~\cite{Pashkovich2018}~\cite{Elkind2009a}) The nucleolus of a weighted voting game can be computed in pseudopolynomial time.
\end{theorem}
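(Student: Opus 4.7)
The plan is to apply Theorem~\ref{th:dp-nucleolus} by exhibiting an integral dynamic programming formulation of size $O(nT)$ for the minimum excess coalition problem of a weighted voting game $(n,\nu)$. Since $\nu$ takes only values $0$ and $1$, a minimum excess coalition $S$ against an imputation $x$ is the better of two candidates: the best ``losing'' coalition, which minimizes $-x(S)$ subject to $w(S) \leq T-1$ (so $\nu(S) = 0$), or the best ``winning'' coalition, which minimizes $-x(S) + 1$ subject to $w(S) \geq T$ (so $\nu(S) = 1$). I would build an integral DP for each of these restricted problems and then merge them.

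For the losing case I would realize the knapsack recurrence $W[k,D]$ already displayed as a directed acyclic hypergraph $H_0$ with one vertex per subproblem $(k,D)$, where $k \in [n]$ and $D \in \{0,\ldots,T-1\}$. From each vertex $(k,D)$ with $k \geq 2$ I would place two single-head arcs: one with head $(k-1, D - w_k)$ and cost $-x_k$ (encoding ``include player $k$ in $S$''), and one with head $(k-1, D)$ and cost $0$ (encoding ``exclude player $k$''); the base cases at $k=1$ become sinks. The affine map $g_0 : \R^{E(H_0)} \to \R^n$ is the linear map sending the basis vector of each ``include $k$'' arc to $\chi_k$ and every other basis vector to $0$; since any $P \in \cP(H_0)$ traverses exactly one arc of each include/exclude pair along its route, $g_0(\chi(P))$ is the incidence vector of the coalition $P$ encodes. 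Because every hyperarc has a single head, $H_0$ is a rooted tree, the ``no common descendants'' condition holds trivially, and $(H_0,g_0,c_0)$ is an integral DP formulation of size $O(nT)$. A symmetric construction based on the complementary knapsack-cover constraint $w(S) \geq T$ produces an integral DP $(H_1,g_1,c_1)$ of the same size with objective shifted by $+1$.

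To combine the two, I would introduce a fresh source vertex $r$ together with two zero-cost single-head arcs from $r$ into the sources of $H_0$ and $H_1$ respectively, extending $g$ so the choice of arc at $r$ has no effect on the output coalition. The resulting formulation $(H,g,c)$ is an integral DP for the full minimum excess coalition problem on $2^{[n]}$, of size $O(nT)$. Invoking Theorem~\ref{th:dp-nucleolus} then yields an algorithm computing the nucleolus in time polynomial in $nT$, i.e., pseudopolynomial in the input $(w,T)$.

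The only delicate point is verifying that $g$ and $c$ faithfully model the minimum excess coalition problem, namely that $g(\cP(H)) = \{0,1\}^n$ and that $c(P) = -x(g(P)) + \nu(g(P))$ along every path $P \in \cP(H)$. This is routine bookkeeping once the tree structure of $H_0$ and $H_1$ and the bijection between their paths and include/exclude sequences compatible with the respective knapsack constraint are made explicit.
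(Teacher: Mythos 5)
Your proposal is correct and follows essentially the same route as the paper: partition coalitions into losing ($w(S)\leq T-1$) and winning ($w(S)\geq T$) classes, realize the knapsack recurrence $W[k,D]$ as a single-headed acyclic hypergraph of size $O(nT)$ for each class, combine the two integral formulations, and invoke Theorem~\ref{th:dp-nucleolus}. The only blemish is a wording slip where you say the losing candidate ``minimizes $-x(S)$'' when you mean it maximizes $-x(S)$ (i.e.\ minimizes $x(S)-\nu(S)$ with $\nu(S)=0$); your actual construction, with cost $-x_k$ on the include arcs, is consistent with the paper's $\max\{-x(S)\}$ formulation.
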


In the following subsections we will see how the added power of hyperarcs lets us solve the more complex problem of computing the nucleolus of $b$-matching games on graphs of bounded treewidth.

\subsection{Treewidth}
\paragraph{}
Consider a graph $G=(V,E)$. We call a pair $(T,\cB)$ a \emph{tree decomposition}~\cite{halin1976s}~\cite{robertson1984graph} of $G$ if $T = (V_T, E_T)$ is a tree and $\cB = \{B_i \subseteq V: i\in V_T\}$ is a collection of subsets of $V$, called \emph{bags}, such that 
\begin{enumerate}
\item $\bigcup_{i \in V_T} B_i = V$, i.e.\ every vertex is in some bag,
\item for each $v \in V$, the subgraph of $T$ induced by $\{i \in V_T: v \in B_i\}$ is a tree, and
\item for each $uv \in E$, there exists $i \in V_T$ such that $u,v \in B_i$.
\end{enumerate}
The \emph{width} of a tree decomposition is the size of the largest bag minus one, i.e.\ 
$\max_{i \in V_T} \{ |B_i| -1 \}.$
The \emph{treewidth} of graph $G$, denoted $\tw(G)$, is minimum width of a tree decomposition of $G$.

We may assume that tree decompositions of a graph have a special structure. We say a tree decompostion $(T,\cB)$ of $G$ is \emph{nice} if there exists a vertex $r \in V_T$ such that if we view $T$ as a tree rooted at $r$ then every vertex $i \in V_T$ is one of the following types:
\begin{itemize}
    \item \textbf{Leaf:} $i$ has no children and $|B_i| = 1$.
    \item \textbf{Introduce:} $i$ has one child $j$ and $B_i = B_j \dot\cup\{v\}$ for some vertex $v \in V$.
    \item \textbf{Forget:} $i$ has one child $j$ and $B_i \dot\cup\{v\} = B_j$ for some vertex $v \in V$.
    \item \textbf{Join:} $i$ has two children $j_1, j_2$ with $B_i = B_{j_1}=B_{j_2}$.
\end{itemize}

It turns out that if a graph has a tree decomposition of width $w$ then a nice tree decomposition of width $w$ times the number of vertices of the graph can be computed in time polynomial in $w$ and the number of vertices of the graph.

\begin{theorem}(~\cite{kloks1994treewidth} Lemma 13.1.3)
  If $G=(V,E)$ has a tree decompostion of width $w$ with $n$ tree vertices then there exists a nice tree decomposition of $G$ of width $w$ and $O(|V|)$ tree vertices which can be computed in $O(|V|)$ time.\end{theorem}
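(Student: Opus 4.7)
The plan is to convert the given tree decomposition $(T,\cB)$ into a nice one by a sequence of local transformations on $T$, each preserving the three tree-decomposition axioms. I would first preprocess to ensure $|V_T| = O(|V|)$: whenever an edge $ij \in E(T)$ satisfies $B_i \subseteq B_j$, contract $i$ into $j$. This operation preserves all three axioms, and a standard charging argument shows the resulting tree has at most $|V|$ nodes, since every remaining non-root bag must properly introduce some new vertex of $G$ relative to its parent. I would then root $T$ at an arbitrary vertex.

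Next I would normalize the branching. For any node $i$ with $d \geq 3$ children, I insert a binary chain of $d-2$ intermediate copies of $i$, each carrying the bag $B_i$, so that every internal node has at most two children; this adds $O(|V_T|)$ nodes in total. Then, for every tree edge $ij$ with $i$ the parent of $j$, I replace it by a chain of length at most $2(w+1)$: a segment of Forget nodes that remove the vertices of $B_i \setminus B_j$ one at a time, followed by a segment of Introduce nodes that add the vertices of $B_j \setminus B_i$ one at a time. For any remaining branching node $i$ whose two children $j_1,j_2$ satisfy $B_{j_1} = B_{j_2} = B_i$, $i$ is already a Join; otherwise the chain insertions from the previous step lift both children to bags equal to $B_i$, after which duplicating $i$ once produces the required Join structure. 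Finally, for each leaf $\ell$ with $|B_\ell| > 1$, I append a chain of $|B_\ell|-1$ Forget nodes to reach a single-vertex bag.

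The main obstacle is to verify that every chain insertion preserves the subtree-connectivity axiom: for each $v \in V$, the set of bags containing $v$ must still induce a subtree. Within an inserted chain this holds because the Forget-then-Introduce ordering guarantees that a vertex $v \in B_i \cap B_j$ is never touched and so is present throughout the chain, while a vertex present at only one end is touched exactly once; connectedness across the entire transformed tree follows by observing that the chain endpoints agree with the bags of the original neighbors. The total number of nodes is $O(w \cdot |V|)$, which the theorem statement absorbs into $O(|V|)$ in the bounded-treewidth regime used throughout the paper, and the construction clearly runs in time linear in the output size.
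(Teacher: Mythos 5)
The paper does not actually prove this statement; it is imported as a black box from Kloks (Lemma 13.1.3), and your construction is precisely the standard one used there (contract nested bags, binarize the branching, splice shrink-then-grow chains into each tree edge, pad the leaves), so the argument is sound. Two minor points worth noting: under this paper's convention an Introduce node is the one whose bag \emph{gains} a vertex relative to its child, so your labels are swapped --- read top-down, the segment deleting $B_i\setminus B_j$ consists of Introduce nodes and the segment adding $B_j\setminus B_i$ consists of Forget nodes (the bag sequence itself is correct, and the shrink-before-grow order is the right choice for preserving the width bound); and your node count of $O(w\cdot|V|)$ is formally weaker than the cited $O(|V|)$, though this is harmless here since $w$ is a constant throughout the paper's application.
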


\subsection{Dynamic Program for $b$-Matching Games}

We want to show that on graphs of bounded treewidth, the nucleolus of $b$-matching games can be computed efficiently. Fix a graph $G=(V,E)$, a vector of $b$-values $b \in \Z^V$, and tree decomposition $(T,\cB)$ of treewidth $w$, where $T$ is rooted at $r$, to be used throughout this section. For $i \in V(T)$ let $T_i$ denote the subtree of $T$ rooted at $i$, and also let $G_i := G[\bigcup_{j \in V(T_i)}B_j]$. For any $v \in V(G_i)$, let $\delta_i(v) := \{uv \in E(G_i)\}$.

\paragraph{}
For any $i \in V(T)$, $X \subseteq B_i$, $d \in \{d \in \Z^{B_i}: 0 \leq d \leq \Delta(G)\}$, and $F \subseteq E(B_i)$, we define the combinatorial optimization problem~\ref{prob:min-excess-b-matching-specific} to be the problem of finding a $b$-matching $M$ and a set of vertices $S$ such that $M$ uses only edges of $G_i$, $S$ uses only vertices of $G_i$, the intersection of $M$ and $E(B_i)$ is $F$, the number of edges in $M$ adjacent to $u$ is $d_u$ for each $u$ in $B_i$, and the vertices in $S$ not intersecting an edge in $F$ is $X$. Formally~\ref{prob:min-excess-b-matching-specific} is defined as follows

\begin{align*}\label{prob:min-excess-b-matching-specific}\tag{C[i,X,d,F]}
    \max\ &w(M) - x(S) \\
    \text{s.t. } &|M \cap \delta_i(v) | \leq b_v &\forall v \in V \\
    &|M\cap \delta_i(u)| \leq d_u &\forall u \in B_i \\
    &d_u = 0 &\forall u \in X\\
    &M \cap E(B_i) = F \\
    &S = V(M)\dot\cup X\\
    &X \subseteq B_i\\
    &M\subseteq E(G_i).
\end{align*}

We define~\ref{prob:min-excess-b-union} to be the union over all $C[i,X,d,F]$.

\begin{align*}\label{prob:min-excess-b-union}\tag{C[i]}
    \max\ &w(M) - x(S) \\
    \text{s.t. } (M,S) &\text{ is feasible for }C[i,X,d,F]\\&\text{ for some }(X,d,F)\in B_i\times\Z^{B_i}\times E(G_i).
    \end{align*}

We will show a dynamic programming formulation $(H,g,c)$ for $C[i]$. Since the feasible region of the minimum excess coalition problem for $b$-matching games is the image of the feasible region of $C[i]$ under the linear map which projects out $M$, and $\nu(S) - x(S) = \max_{(M,S) \text{feasible for $C[i]$}} w(M) - x(S),$ the existence of $(H,g,c)$ will imply the existence a dynamic programming formulation of the minimum excess coalition problem for $b$-matching games of the same encoding length.

\begin{lemma}\label{lemma:min-excess-b-matching-specific-dp}
 Let $i \in V(T)$. There exists an integral dynamic programming formulation $(H,g,c)$ for \ref{prob:min-excess-b-union} such that 
\begin{enumerate}[1)]
    \item $\abs{E(H)} \leq \abs{V(T_i)}\cdot w \cdot \Delta(G)^{w} \cdot w^2$ and
    \item For every $j \in V(T_i)$, $X \subseteq B_i$, $d\in \Z^{B_i}$, and $F\subseteq E(B_i)$, if \ref{prob:min-excess-b-matching-specific}) has a feasible solution then there exists $a \in V(H)$ such that $(H_a,g,c)$ is an integral dynamic programming formulation for \ref{prob:min-excess-b-matching-specific}
\end{enumerate}
\end{lemma}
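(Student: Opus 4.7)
The plan is to build $H$ bottom-up along a nice tree decomposition of $G$ derived from $(T,\cB)$ via the Kloks theorem cited above, keeping $|V(T)|$ linear in $|V(G)|$. The vertices of $H$ will be pairs $(j,(X,d,F))$ where $j \in V(T_i)$ and $(X,d,F)$ is a valid state at $j$: $X \subseteq B_j$, $d \in \{0,\ldots,\Delta(G)\}^{B_j}$, $F \subseteq E(B_j)$, $d_u = 0$ for $u \in X$, and $d_u \geq |F \cap \delta(u)|$ for all $u \in B_j$. The hyperarcs are determined by the type of $j$ in the nice decomposition: a Leaf yields a sink; an Introduce node (with child $j'$ and new vertex $v$) yields a single-head arc to the child state obtained by deleting $v$ from $X$, $d$, and $F$; a Forget node yields a single-head arc for each possible final status $(X_v,d_v,F_v)$ of the forgotten vertex $v$, pointing to the child state obtained by appending this data; and a Join node with children $j'_1, j'_2$ yields a two-head arc to states $(X,d^1,F)$ at $j'_1$ and $(X,d^2,F)$ at $j'_2$, for every splitting $d^1+d^2 = d + \chi_F$ of the degree vector, where $\chi_F(u) := |F \cap \delta(u)|$. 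The affine map $g$ reads off $M$ from the $F$-edges introduced at Introduce arcs, and $S$ from $V(M)$ together with the $X_v$-contributions of Forget arcs. The cost $c$ assigns $\sum_{e \in F \cap \delta(v)} w_e$ to an Introduce arc, $-x_v$ to a Forget arc whenever $v$ ends up in $S$ (that is, $X_v = \{v\}$ or $d_v > 0$), and $0$ to a Join arc.

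I would prove correctness by induction on $|V(T_i)|$, showing that for every valid state $(X,d,F)$ at every $j \in V(T_i)$ for which \ref{prob:min-excess-b-matching-specific} has a feasible solution, the vertex $a$ corresponding to $(j,X,d,F)$ satisfies that $(H_a, g, c)$ is a dynamic programming formulation for that problem. The Leaf case is immediate since $G_j$ has no edges. Introduce and Forget nodes follow from the inductive hypothesis via their unique child. The essential case is Join: a feasible $(M,S)$ decomposes as $M = M_1 \cup M_2$ with $M_k := M \cap E(G_{j'_k})$, and because $V(G_{j'_1}) \cap V(G_{j'_2}) = B_j$ we have $E(G_{j'_1}) \cap E(G_{j'_2}) = E(B_j)$, which forces $M_1 \cap M_2 = F$ and the degree identity $d^1_u + d^2_u - |F \cap \delta(u)| = d_u$ on $B_j$, matching exactly the splitting used to generate the hyperarcs. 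Cost correctness follows because each matching edge contributes $w_e$ at the unique Introduce arc where the ``later'' endpoint of $e$ enters $B_j$, and each $v \in S$ contributes $-x_v$ at the unique Forget arc that removes $v$ from the bag.

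For integrality of the formulation, the only hyperarcs with more than one head are at Join nodes; their two heads live in the vertex-disjoint subtrees $T_{j'_1}$ and $T_{j'_2}$ of $T$, so no state-vertex of $H$ can be a descendant of both heads, and the no-common-descendants condition holds. For the size bound, each $j$ contributes at most $2^{|B_j|}(\Delta(G)+1)^{|B_j|}2^{|E(B_j)|}$ state-vertices, and each such state is the tail of at most $(\Delta(G)+1)^{|B_j|}$ arcs (the Join case dominates); substituting $|B_j| \leq w+1$ and summing over $|V(T_i)|$ nodes gives a bound polynomial in $w$ and matching the stated form up to constants in $w$. The main obstacle is the Join step: one must absorb the shared edges $F$ into the degree splitting without double-counting their contribution to $d$, and simultaneously exploit the tree disjointness of $T_{j'_1}$ and $T_{j'_2}$ to transfer integrality through the two-head arcs.
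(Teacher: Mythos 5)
Your construction is essentially the paper's: a hypergraph whose nodes are states $(j,X,d,F)$ over a nice tree decomposition, single\nobreakdash-head arcs at Introduce and Forget nodes, two\nobreakdash-head arcs at Join nodes with the degree splitting $d^1_u+d^2_u-|F\cap\delta(u)|=d_u$, and integrality from the disjointness of the two child subtrees. However, there is a genuine gap in your cost accounting at Join nodes. You assign cost $0$ to a Join arc, and you justify edge weights by claiming each matching edge $e$ is charged at ``the unique Introduce arc where the later endpoint of $e$ enters $B_j$.'' That uniqueness fails precisely for edges $e\in F$ at a Join node $j$: your Join transition (correctly, and necessarily, since $M_k\cap E(B_{j'_k})=M\cap E(B_i)=F$ for both $k$) copies $F$ into the states of \emph{both} children, and in each child's branch $e$ persists in the state until one of its endpoints is introduced, at which point your rule charges $w_e$. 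So every edge of $F$ at a Join node contributes $w_e$ twice to $c(P)$, and the formulation no longer satisfies $f(g(P))=c(P)$. You noticed the analogous double counting in the degree vector and corrected it with the $+\chi_F$ term, but the same correction is needed in the objective: the Join arc must carry cost $-w(F)$ (this is exactly what the paper does, together with a $+x(\{u\in B_i: d^1_u>0 \text{ and } d^2_u>0\})$ term that is needed under the paper's vertex accounting but not under yours, since you charge $-x_v$ only at the unique Forget arc for $v$).

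Two smaller points. First, vertices that remain in the root bag are never forgotten, so under your scheme they never receive their $-x_v$ charge; you should either pad the decomposition with Forget nodes above the root until the root bag is empty, or treat the root states specially. Second, your per\nobreakdash-node state count carries factors $2^{|B_j|}2^{|E(B_j)|}$ for the choices of $X$ and $F$; this is still fine for constant treewidth (and, if anything, more honest than the bound stated in the lemma), but it does not literally match the form $|V(T_i)|\cdot w\cdot\Delta(G)^w\cdot w^2$, so you should say explicitly that you are only claiming a bound of the form $|V(T_i)|\cdot h(w)\cdot\Delta(G)^{O(w)}$ for some function $h$.
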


\begin{proof}\label{proof:lemma:min-excess-b-matching-specific-dp}
    We proceed by induction on the distance from $i$ to a leaf of $T$. If $i$ is a leaf of $T$ the Lemma is trivial. So we suppose that $i$ is not a leaf of $T$ and the Lemma holds for all vertices $j$ of $T$ which are closer to a leaf than $i$.

    Now we proceed by case distinction on which class of node $i$ is: an Introduce node, a Forget node, or a Join node.

    \paragraph{Case: Introduce.} If $i$ is an Introduce node then $i$ has one child $j$, and $B_i = B_j \dot\cup\{v\}$ for some vertex $v \in V$. Let $(H',g',c')$ be the dynamic programming formulation for $C[j]$ guaranteed by the inductive hypothesis. We construct a new dynamic programming formulation $(H,g,c)$ via the following procedure:
    \begin{enumerate}
        \item Initialize $H$ to $H'$.
        \item For each $e' \in E(H')$, define the action of $g$ on $e'$ to be $g(e') = g'(e')$.
        \item For each $e' \in E(H')$, define the action of $c$ on $e'$ to be $c(e') = c'(e')$.
        \item For every $X \subseteq B_i$, $d \in Z^{B_i}$, $F \subseteq E(B_i)$ such that $C[i,X,d,F]$ has a feasible solution do:
        \begin{enumerate}
            \item Add a new vertex $a^{i,X,d,F}$ to $V(H)$
            \item Let $a^{j, X\cap B_j, d', F\cap E(B_j)}$ be the vertex of $V(H')$ guaranteed by property $2)$, such that $$(H_{a^{j, X\cap B_j, d', F\cap E(B_j)}},g,c)$$ is a dynamic programming formulation for $C[j,X\cap B_j, d', F\cap E(B_j)]$, where $$d'_u := d_u - \begin{cases} 1, &\text{ if $uv \in F$}\\
                                0, &\text{otherwise.}
            \end{cases}$$ for all $u \in B_j.$
            \item Add arc $e=(a^{i,X,d,F}, \{a^{j, X\cap B_j, d', F\cap E(B_j)}\})$ to $E(H)$.
            \item Define action of $g$ on $e$ to be 
            $$g(e) = \begin{pmatrix}
                F\backslash E(B_j) \\ 
                (X\backslash B_j) \cup (\{v\}\cap V(F))
            \end{pmatrix} \in \{0,1\}^{E(G_i), V(G_i)}$$
            \item Define action of $c$ on $e$ to be $$c(e) = w(F\backslash E(B_j)) - x(X\backslash B_j) - x(\{v\}\cap V(F)).$$
        \end{enumerate}
    \end{enumerate}
    Property $1)$ is immediate from induction and the number of arcs added by the construction.

    To verify property $2)$ consider $C[i',X,d,F]$ satisfying the hypothesis of property $2)$. If $i' \neq i$ property $2)$ holds immediately by induction. So suppose $i' = i$. We claim that $(H_{a^{i,X,d,F}}, g,c)$ is a dynamic programming formulation for $C[i,X,d,F]$. 
    
    Let $\begin{pmatrix}M& S\end{pmatrix}^T$ be a feasible solution to $C[i,X,d,F]$. For any $\begin{pmatrix}M & S\end{pmatrix}^T$ feasible for $C[i,X,d,F]$ there exists $
    \begin{pmatrix}M' & S'\end{pmatrix}^T$ feasible for $C[j, X \cap B_j, d', F\cap E(B_j)]$ (where $d'$ is as defined in Step $2b)$ of the construction) such that 
    $$M = M' \dot\cup \left(F\cap E(B_j)\right) \quad\text{and}\quad S = S' \dot\cup \left(X\backslash B_j) \cup (\{v\}) \cap V(F))\right).$$
    Moreover, for any $\begin{pmatrix}M' & S'\end{pmatrix}^T$ feasible for $C[j, X\cap B_j, d', F\cap E(B_j)]$,
    $$ \begin{pmatrix}
        M'\dot\cup \left(F\cap E(B_j)\right) \\
        S' \dot\cup \left(X\backslash B_j\right)\cup (\{v\}\cap V(F))
    \end{pmatrix}$$
    is feasible for $C[i,X,d,F]$. Hence by Step $2d)$ and Step $2e)$ of the construction, and the inductive hypothesis, $g$ and $c$ behave as desired for the dynamic program $(H_{a^{i,X,d,F}},g,c)$ to be a dynamic programming formulation of $C[i,X,d,F]$.

    \paragraph{Case: Forget.} If $i$ is a Forget node then $i$ has one child $j$, and $B_i\dot\cup \{v\} = B_j$ for some vertex $v \in V$. Let $(H', g', c')$ be the dynamic programming formulation for $C[j]$ guaranteed by the inductive hypothesis. We construct a new dynamic programming formulation $(H,g,c)$ via the following procedure:
    \begin{enumerate}
        \item Initialize $H$ to $H'$.
        \item For each $e' \in E(H')$, define the action of $g$ on $e'$ to be $g(e') = g'(e')$.
        \item For each $e' \in E(H')$, define the action of $c$ on $e'$ to be $c(e') = c'(e')$.
        \item For every $X\subseteq B_i, d\in Z^{B_i}, F\subseteq E(B_i)$ such that $C[i,X,d,F]$ has a feasible solution do:
        \begin{enumerate}
            \item Add a new vertex $a^{i,X,d,F}$ to $V(H)$.
            \item For each $J \subseteq \delta(v) \cap E(B_j)$, $d' \in \Z^{B_j}$ such that $d'_u = d_u$ for all $u \in B_i$, and for each $Y \in \{X, X \cup\{v\}\}$, if $C[j,Y,d',F\cup J]$ has a feasible solution do:
            \begin{enumerate}
                \item Let $a^{j, Y, d', F\cup J}$ be the vertex of $V(H')$ guaranteed by property 2), such that $(H_{a^{j, Y, d', F\cup J}}, g,c)$ is a dynamic programming formulation for $C[j, Y, d', F\cup J]$.
                \item Add arc $e = (a^{i,X,d,F}, \{a^{j,Y,d', F\cup J}\})$ to $E(H)$.
                \item Define action of $g$ on $e$ to be $g(e) = 0$.
                \item Define action of $c$ on $e$ to be $c(e) = -w(J) + x(Y\cap \{v\})$.
            \end{enumerate}
        \end{enumerate}
    \end{enumerate}
    Property $1)$ is immediate from induction and the number of arcs added by the construction.

    To verify property $2)$ consider $C[i',X,d,F]$ satisfying the hypothesis of property $2)$. If $i'\neq i$ property $2)$ holds immediately by induction. So suppose $i' = i$. We claim that $(H_{a^{i,X,d,F}}, g,c)$ is a dynamic programming formulation for $C[i,X,d,F]$.

    Let $\begin{pmatrix}
        M &  S
    \end{pmatrix}^T$ be feasible for $C[i,X,d,F]$. Then there exists $$\begin{pmatrix}
        M' \\ S'
    \end{pmatrix} \in \{0,1\}^{E(G_j),V(G_j)}$$ such that $M' = M \dot\cup J$ for some $J \in \delta(v) \cap E(B_j)$ and $S' = S \cup Y$ for some $Y \in \{X, X \cup \{v\}\}$. Further there exists $d' \in \Z^{B_j}$ such that $|M'\cap \delta(u)| = d'_u$ for all $u \in B_j$. Hence $(M',S')$ is feasible for $C[j,Y,d',F\cup J]$. Moreover for any $\begin{pmatrix}
        M' & S'
    \end{pmatrix}^T$ feasible for $C[j,Y,d',F\cup J]$,
    $$\begin{pmatrix}
        M'\backslash J \\
        (S\cup X)\backslash Y
    \end{pmatrix}$$
    is feasible for $C[i,X,d,F]$. Thus by induction and Step $2b)$ of the construction, $g$ and $c$ behave as desired for $(H_{a^{i,X,d,F}}, g,c)$ to be a dynamic programming formulation of $C[i,X,d,F]$.

    \paragraph{Case: Join.} If $i$ is a Join node then $i$ has two children: $j_1$ and $j_2$, and $B_{j_1}=B_{j_2}=B_i$. Let $(H^1,g^1,c^1)$ be the dynamic programming formulation for $C[j_1]$ guaranteed by the induction hypothesis, and let $(H^2,g^2,c^2)$ be the similarly guaranteed dynamic programming formulation for $C[j_2]$ We construct a new dynamic programming formulation $(H,g,c)$ via the following procedure:
    \begin{enumerate}
        \item Initialize $H = H^1\dot\cup H^2$.
        \item For each $e' \in E(H^1)$, define the action of $g$ on $e'$ to be $g(e') = g^1(e')$. Similarly for each $e' \in E(H^2)$, define the action of $g$ on $e'$ to be $g(e') = g^2(e')$.
        \item For each $e' \in E(H^1)$, define the action of $c$ on $e'$ to be $c(e') = c^1(e')$. Similarly for each $e' \in E(H^2)$, define the action of $c$ on $e'$ to be $c(e') = c^2(e')$.
        \item For every $X \subseteq B_i$, $d \in \Z^{B_i}$, $F \subseteq E(B_i)$ such that $C[i,X,d,F]$ has a feasible do:
        \begin{enumerate}
            \item Add a new vertex $a^{i,X,d,F}$ to $V(H)$.
            \item For every $d^1 \in \Z^{B_{j_1}}$ such that $C[j_1, X,d^1, F]$ has a feasible solution, and for every $d^2 \in \Z^{B_{j_2}}$ such that $C[j_2, X, d^2, F]$ has a feasible solution, if $d^1+d^2-|F\cap \delta(u)| = d_u$ for all $u \in B_i$ then do:
            \begin{enumerate}
                \item Let $a^{j_1, X, d^1, F}$ be the vertex of $V(H^1)$ guaranteed by property $2)$, such that $(H_{a^{j_1, X,d^1,F}}, g,c)$ is a dynamic programming formulation for $C[j_1,X,d^1,F]$.
                \item Let $a^{j_2, X, d^2, F}$ be the vertex of $V(H^2)$ guaranteed by property $2)$, such that $(H_{a^{j_2, X,d^2,F}}, g,c)$ is a dynamic programming formulation for $C[j_2,X,d^2,F]$.
                \item Add arc $e=(a^{i,X,d,F}, \{a^{j_1, X, d^1, F}, a^{j_2, X, d^2, F}\})$ to $E(H)$.
                \item Define action of $g$ on $e$ to be $g(e) = 0$.
                \item Define action of $c$ on $e$ to be $$c(e) = -w(F) + x(\{u \in B_i : d^1_u > 0 \text{ and } d^2_u > 0\}).$$
            \end{enumerate}
        \end{enumerate}
    \end{enumerate}
    Property $1)$ is immediate from induction and the number of arcs added by the construction.

    To verify property $2)$ consider $C[i',X,d,F]$ satisfying the hypothesis of property $2)$. If $i'\neq i$ property $2)$ holds immediately by induction. So suppose $i'=i$. We claim that $(H_{a^{i,X,d,F}},g,c)$ is a dynamic programming formulation for $C[i,X,d,F]$.
    
    Let $\begin{pmatrix}M & S\end{pmatrix}^T$ be feasible for $C[i,X,d,F]$. Then there exists $\begin{pmatrix}M_1 & S_1\end{pmatrix}^T$ feasible for $C[j_1, X, d^1, F]$ and $\begin{pmatrix}M_2 & S_2\end{pmatrix}^T$ feasible for $C[j_2,X,d^2,F]$ such that $$d^1+d^2-|F\cap \delta(u)| = d_u$$ for all $u \in B_i$, satisfying that $$M = M_1 \cup M_2 \quad\text{and}\quad S = S_1 \cup S_2.$$
    Moreover for any $\begin{pmatrix}M_1 & S_1\end{pmatrix}^T$ feasible for $C[j_1,X,d^1,F]$ and for any $\begin{pmatrix}M_2 & S_2\end{pmatrix}^T$ feasible for $C[j_2,X,d^2,F]$ such that $d^1+d^2-|F\cap \delta(u)| = d_u$ for all $u \in B_i$, $$\begin{pmatrix}M_1\cup M_2\\S_1 \cup S_2\end{pmatrix}$$ is feasible for $C[i,X,d,F]$. Hence $g(\cP(H_{a^{i,X,d,F}}))$ is equal to the feasible region $C[i,X,d,F]$ by Steps $2b)ii-2b)iv$ of the construction. Furthermore 
    \begin{align*}
        w(M) - x(S) &= w(M_1) + w(M_2) - w(M_1\cap M_2) - x(S_1) - x(S_2) + x(S_1 \cap S_2)\\
        &=w(M_1) -x(S_1) + w(M_2) - x(S_2) - w(F)\\ &+ x(\{u \in B_i : d^1_u > 0 \text{ and } d^2_u > 0\})
    \end{align*}
    with the second equality following since there are no edges from a vertex in $V(G_{j_1})\backslash B_i$ to a vertex in $V(G_{j_2})\backslash B_i$ by the properties of tree decompositions. Hence $c$ behaves as desired for $(H_{a^{i,X,d,F]}}, g,c)$ to be a dynamic programming formulation of $C[i,X,d,F]$.

    It is not hard to see that the constructions in each case preserve integrality.
$\blacksquare$\end{proof}

By the preceding discussion, Lemma~\ref{lemma:min-excess-b-matching-specific-dp}, and Theorem~\ref{th:dp-nucleolus} we have shown Theorem~\ref{th:b-matching-nucleolus}.

\section{Conclusion and Future Work}
\paragraph{}
We have given a formalization of dynamic programming, and shown that in this formal model adding congruency constraints only increases the complexity by a polynomial factor of the prime modulus. From this, we showed that whenever the minimum excess coalition problem of a cooperative game can be solved via dynamic programming, its nucleolus can be computed in time polynomial in the size of the dynamic program. Using this result we gave an algorithm for computing the nucleolus of $b$-matching games on graphs of bounded treewidth.

In~\cite{Kaibel2010a} they show that a generalization of the dynamic programming model in~\cite{martin1990polyhedral} called Branched Polyhedral Systems also has an integral extended formulation. It is natural to wonder how our framework could extend to Branched Polyhedral Systems and if that would enable to computation of the nucleolus for any interesting classes of cooperative games.

\bibliography{library}
\bibliographystyle{splncs04}

%
%
%

\end{document}